\newif\ifsubmit
\newcommand{\EZ}[1]{#1}
\newcommand{\EZComm}[1]{}
\newcommand{\PG}[1]{#1}
\newcommand{\PGComm}[1]{}
\newcommand{\DAComm}[1]{}
\newcommand{\EZ}[1]{\textcolor{blue}{#1}}
\newcommand{\EZComm}[1]{{\scriptsize \textcolor{blue}{[Elena{:} #1]}}}
\newcommand{\PG}[1]{\textcolor{magenta}{#1}}
\newcommand{\PGComm}[1]{{\scriptsize \textcolor{magenta}{[Paola{:} #1]}}}
\newcommand{\DAComm}[1]{{\scriptsize \textcolor{green}{[D{:} #1]}}}
\newcommand{\SALTA}[1]{}
\newenvironment{proof}[1][Proof]{\begin{trivlist}
\item[\hskip \labelsep {\bfseries #1}]}{\end{trivlist}}
\newcommand{\HSep}{\hbox to \textwidth{\bf\hrulefill}}
\newcommand{\lab}[1]{{\scriptstyle{\textsc{(#1)}}}}
\newcommand{\Space}{\hskip 0.8em}
\newcommand{\NamedRule}[4]{\scriptstyle{\textsc{(#1)}}\
\displaystyle                  
\frac{#2}{#3}\           
#4     
}
\newcommand{\Rule}[3]{{\scriptstyle{}}\
\displaystyle
\frac{#1}{#2}\           
#3     
}
\newcommand{\refToFigure}[1]{Figure~\ref{fig:#1}}
\newcommand{\refToSection}[1]{Section~\ref{sect:#1}}
\newenvironment{grammatica}{$\begin{array}[t]{lcll}}{\end{array}$}
\newcommand{\produzione}[3]{#1&{:}{:}=&#2 & \mbox{{\small{#3}}}}
\newcommand{\terminale}[1]{{\text{\tt #1}}}
\newcommand{\te} {\mathit{t}} 
\newcommand{\val}{\mathit{v}}
\newcommand{\x}{\mathit{x}} 
\newcommand{\y}{\mathit{y}} 
\newcommand{\z}{\mathit{z}} 
\newcommand{\X}{\mathit{X}} 
\newcommand{\Y}{\mathit{Y}} 
\newcommand{\nval}{\mathit{n}}
\newcommand{\tenv}{\Gamma}
\newcommand{\xtenv}{\Delta}
\newcommand{\error}{\mathit{error}}
\newcommand{\BinExp}[3]{#2\mathrel{#1}#3}
\newcommand{\SumExp}[2]{\BinExp{\SumKw}{#1}{#2}}
\newcommand{\SumKw}{\terminale{+}}
\newcommand{\LambdaExp}[2]{\lambda #1.#2}
\newcommand{\LambdaExpWithType}[3]{\lambda #1:#2.#3}
\newcommand{\LambdaCode}[3]{\lambda #1[#2].#3}
\newcommand{\AppExp}[2]{#1\ #2}
\newcommand{\Unbind}[2]{\langle#1\ \vert\ #2\rangle}
\newcommand{\Rebind}[2]{#1[#2]}
\newcommand{\tsubst}{\mathit{s}}
\newcommand{\tsubstN}{\mathit{r}}
\newcommand{\context}{{\cal E}}
\newcommand{\emptycontext}{[\,]}
\newcommand{\subst}{\sigma}
\newcommand{\ev}{\longrightarrow}
\newcommand{\extractTEnv}[1]{\EZ{\mathit{tenv}}(#1)}
\newcommand{\extractXEnv}[1]{\EZ{\mathit{xenv}}(#1)}
\newcommand{\extractNEnv}[1]{\mathit{Xenv}(#1)}
\newcommand{\FV}{\mathit{FV}}
\newcommand{\Int}{{\mathbb Z}}
\newcommand{\ApplySubst}[2]   {#1\{#2\}}
\newcommand{\CancelSubst}[2]   {#1_{\setminus #2}}
\newcommand{\dom}{\mathit{dom}}
\newcommand{\range}{\mathit{rng}}
\newcommand{\inContext}[1]{\context[#1]}
\newcommand{\intType}{\terminale{int}}
\newcommand{\codeType}[2]{\langle{#1}\rangle{#2}}
\newcommand{\T}{\mathit{T}}
\newcommand{\funType}[2]{#1 \rightarrow {#2}}
\newcommand{\TypeWithMod}[2] {\Unbind{#1}{#2} } 
\newcommand{\IsWFExp}[3]           {#1\vdash #2 : #3}
\newcommand{\Subst}[2]   {{#1[#2]}}
\newtheorem{theorem}{{\bf Theorem}}
\newtheorem{lemma}[theorem]{{\bf Lemma}}
\newtheorem{example}[theorem]{{\bf Example}}
\title{{Reconciling} positional and nominal binding\thanks{This work has been partially supported
by MIUR DISCO - Distribution, Interaction, Specification,
Composition for Object Systems, and MIUR CINA - .}\\
}
\author{Davide Ancona
\institute{DIBRIS, Univ. di Genova, Italy}
\and Paola Giannini
\institute{DISIT, Univ. Piemonte Orientale, Italy}
 \and Elena Zucca
 \institute{DIBRIS, Univ. di Genova, Italy}
 }
\begin{document}
\maketitle

\begin{abstract}
We define an extension of the simply-typed lambda-calculus where two different binding mechanisms{,} \emph{by position} and \emph{by name}, nicely coexist.
In the former, as in standard lambda-calculus, the matching between parameter and argument is done on a
\emph{positional} basis, hence $\alpha$-equivalence holds,
whereas in the latter it is done on a \emph{nominal} basis. The two mechanisms also respectively correspond to static binding, where the existence and type compatibility of the argument are checked at compile-time, and 
dynamic binding, where they are checked at run-time. 
 \end{abstract}


\section{Introduction}
Two different binding mechanisms which are both widely applied in computer science are \emph{binding by position} and \emph{binding by name}.
In the former, matching is done on a
\emph{positional} basis, hence $\alpha$-equivalence holds, as demonstrated by the de Bruijn presentation of the lambda-calculus. This models parameter passing in most languages.
In the latter, matching is done on a \emph{nominal} basis, hence $\alpha$-equivalence does not hold, as in name-based parameter passing, method look-up in object-oriented languages, and synchronization in process calculi. 
Usually, identifiers which can be $\alpha$-renamed are called \emph{variables}, whereas \emph{names} cannot be $\alpha$-renamed (if not globally in a
program) {\cite{AnconaMoggi04, NanevskiPfenning05}}.
An analogous difference holds between tuples and records, as
recently discussed  {by Rytz and Odersky} \cite{RytzOdersky10}. 
The record notation has been extremely successful in object-oriented languages, whereas functional languages use prevalently tuples for non curried functions.
{The positional notation allows developers not to be constrained to a particular choice of names; from the point of view of clients, instead, the nominal notation can be better, 
since names are in general more suggestive}. However, in both cases {developers and clients have to} agree on some convention, either positional or nominal.

\EZ{The aim of this paper is to define a very simple and compact calculus which smoothly 
integrates positional  and nominal binding, providing a ``minimal'' unifying foundation for these two mechanisms, and to investigate the expressive power of their combination.}
Notably, we extend the simply typed lambda-calculus with two constructs.
\begin{itemize}
\item An \emph{unbound term} $\Unbind{\tsubstN}{\te}$, with $\tsubstN=\x_1\mapsto\X_1,\ldots,\x_m\mapsto\X_m$, is a value representing ``open
code''. That is, $\te$ may contain free occurrences of variables $\x_1,\ldots,\x_{\EZ{m}}$ to be dynamically bound, when code will be used, through the global nominal interface offered by names $\X_1,\ldots,\X_{\EZ{m}}$. {Each occurrence of $\x_1,\ldots,\x_{\EZ{m}}$ in $\tsubstN$ is called an \emph{unbinder}.} 
\item To be used, open code should be passed as argument to a \emph{rebinding lambda-abstraction} $\LambdaCode{\x}{\tsubst}{\te}$, with $\tsubst=\X_1\mapsto\te_1,\ldots,\X_m\mapsto\te_m$. 
This construct behaves like a standard lambda-abstraction. However, the argument, which is expected to be open code, is not used as it stands, but \emph{rebound} as specified by $\tsubst$, and if some rebinder is missing a dynamic error occurs.
\end{itemize}
For instance,  the application $(\LambdaCode{\z}{\X\mapsto 1,\Y\mapsto 2}{\z})\Unbind{\x\mapsto \X,\y\mapsto\Y}{\SumExp{\x}{\y}}$ reduces to $\SumExp{1}{2}$, while $(\LambdaCode{\z}{\X\mapsto 1}{\z})\Unbind{\x\mapsto \X,\y\mapsto\Y}{\SumExp{\x}{\y}}$ reduces to $\error$.

This proposal is based on our previous extension of lambda-calculus with unbind and rebind primitives \cite{DezaniEtAl10,DezaniEtAl11} {(of which \cite{DezaniEtAl09} is a preliminary version)} and indeed shares with this work the ability to express static and dynamic binding mechanisms within the same calculus. {A thorough comparison between the current calculus and the calculi of \cite{DezaniEtAl10,DezaniEtAl11} is presented in Section \ref{subS:comparison}.}

In the rest of this paper, we first provide the formal definition of an untyped version of the calculus (\refToSection{untyped}), then of a typed version with its type system (\refToSection{typed}), for which we prove a soundness result in \refToSection{sound}. In Section \ref{subS:comparison} we compare this calculus with our previous calculi and with \EZ{various} other calculi and examine the meta-programming features offered by the calculus.  Finally, in the Conclusion we discuss future work.

\section{{Untyped calculus}}\label{sect:untyped}
The syntax and {reduction rules} of the {untyped} calculus are given in
\refToFigure{calculus}. {We assume infinite sets of \emph{variables} $\x$ and \emph{names} $\X$.}
\begin{figure}[h]
\HSep
\begin{grammatica}
\produzione{\te}{\x\mid\nval\mid\SumExp{\te_1}{\te_2}\mid\LambdaExp{{\x}}{\te}\mid\AppExp{\te_1}{\te_2}\mid\Unbind{\tsubstN}{\te}\mid
\LambdaCode{\x}{\tsubst}{\te}\mid\error } {{term}}\\
\produzione{\tsubstN}{\x_1\mapsto\X_1,\ldots,\x_m\mapsto\X_m }{{unbinding map}}   \\ 
\produzione{\tsubst}{\X_1\mapsto\te_1,\ldots,\X_m\mapsto\te_m }{{rebinding map}}   \\ \\
\produzione{\val}{{\nval}\mid\LambdaExp{\x}{\te}\mid {\mid\LambdaCode{\x}{\tsubst}{\te}\mid\Unbind{\tsubstN}{\te}\ (\FV(\te){\subseteq}\dom(\tsubstN))}}{{value}}\\
\produzione{\context}{\emptycontext\mid\SumExp{\context}{\te}\mid\SumExp{\nval}{\context}\mid\AppExp{\context}{\te}\mid\AppExp{\val}{\,\context}} {{evaluation context}}\\
\produzione{{\subst}}{{\x_1\mapsto{\te_1},\ldots,\x_m\mapsto{\te_m}}}{{substitution}}
\end{grammatica}

\HSep
\[  
\begin{array}{lcl}
 {\SumExp{\nval_1}{\nval_2}\ev\nval} &
  {\mbox{if}\quad\tilde{\nval}=\tilde{\nval}_1+^\Int\tilde{\nval}_2} & \lab{Sum} \\
{\AppExp{(\LambdaExp{{x}}{\te})}{\val}\ev\ApplySubst{\te}{\x\mapsto\val}}{}{}
  &\quad&\lab{App}
  \\   
 \AppExp{(\LambdaCode{\x}{\tsubst}{\te}) }   
    {\Unbind{\tsubstN}{\te'} } \ev
 \ApplySubst{\te}{\x\mapsto \ApplySubst{\te'}{{\y\mapsto \tsubst(\tsubstN(\y)) \mid \y\in\dom(\tsubstN)} } }
  &\range(\tsubstN)\subseteq\dom(\tsubst)& \lab{{AppRebindOK}} 
  \\   \\
 \AppExp{(\LambdaCode{\x}{\tsubst}{\te}) }   
    {\Unbind{\tsubstN}{\te'} } \ev \error
  &\range(\tsubstN)\not\subseteq\dom(\tsubst)& \lab{{AppRebindERR}} \\
\end{array}
\]
\[
\begin{array}{l@{\quad\quad}l}
 \prooftree  \te\ev \te'\quad\quad \context\not=\emptycontext
  \justifies \inContext{\te}\ev\inContext{\te'} \using \lab{Cont}
\endprooftree &
\prooftree \te\ev \error \quad\quad \context\not=\emptycontext
  \justifies \inContext{\te}\ev\error \using \lab{ContError}
\endprooftree
\end{array}
\]
\caption{Syntax and reduction rules}\label{fig:calculus} \HSep
\end{figure}

Terms of the calculus are $\lambda$-calculus terms, {\em unbound terms}, {\em rebinding lambda-abstractions}, and a term representing {\em dynamic error}. We also include
integers with addition for concreteness. We use $\tsubstN$  for \emph{unbinding maps}, which are finite maps from variables to names, and
  $\tsubst$  for \emph{rebinding maps}, which are finite maps from names to terms. {Note that a standard lambda-abstraction is \emph{not} a special case of rebinding lambda-abstraction, that is, the term $\LambdaCode{\x}{\emptyset}{\te}$ behaves differently from $\LambdaExp{\x}{\te}$.}
  
\EZ{Note that in unbound terms we write, say, $\Unbind{\x\mapsto\X}{\AppExp{\y}{\x}}$, rather than directly $\langle\AppExp{\y}{\X}\rangle$, that is, differently from, e.g., \cite{Nanevski03}, names are not terms� of the underlying language but we keep an explicit mapping from variables into names.
This distinction  is a tradition in module calculi \cite{AZ-JFP02} and the main motivation is to keep separate the intra-module language, or \emph{core} language 
(in our paper, the language used to write code which can be unbound/rebound, which is here  
lambda-calculus for simplicity) from the inter-module language (constructs at the meta-level\footnote{\EZ{In this paper they are limited to  the unbind and rebind constructs but 
they could include, for instance, 
a renaming construct.}}) whose semantics can then be given independently from the core language. 
With our approach the inter-module language/meta-level can be built smoothly 
on top of the core language, without changing its syntax/semantics.
The inter-module language could be even applied to terms coming from different languages.
}

The operational semantics is described by {the reduction rules in \refToFigure{calculus}.} We denote
by $\tilde{\nval}$ the integer represented by the constant $\nval$, {by $+^\Int$ the sum of integers,} and by $\dom$ and $\range$ the domain and range of a map, respectively. The application of a substitution to a term,
$\ApplySubst{\te}{\subst}$, is defined, together with free
variables, in \refToFigure{subst}{, where we denote  by $\CancelSubst{\subst}{S}$ the substitution obtained from $\subst$ by removing variables in set $S$.} 
\begin{figure}[h]
\HSep

$\begin{array}{l}
\\
  \FV(\x)=\{\x\}\\
  \FV(\nval)=\emptyset\\
  \FV(\SumExp{\te_1}{\te_2})=\FV(\te_1)\cup\FV(\te_2)\\
  \FV(\LambdaExp{\x}{\te})=\FV(\te)\setminus\{\x\}\\
  \FV(\AppExp{\te_1}{\te_2})=\FV(\te_1)\cup\FV(\te_1)\\
  \FV(\Unbind{\tsubstN}{\te})=\FV(\te)\setminus\dom(\tsubstN)\\
 { \FV(\LambdaCode{\x}{\tsubst}{\te})=(\FV(\te)\setminus\{\x\})\cup\FV(\tsubst)}\\
  { \FV(\X_1\mapsto\te_1,\ldots,\X_m\mapsto\te_m)=\bigcup_{i\in 1..m}\FV(\te_i)}
  \\ \\
  \ApplySubst{\x}{\subst}={{\te}}\quad\quad \mbox{if}\ {\subst(\x)={\te}}\\
  \ApplySubst{\x}{\subst}=\x\quad\quad \mbox{if}\ \x\not\in\dom(\subst)\\
 \ApplySubst{\nval}{\subst}=\nval\\
  \ApplySubst{(\SumExp{\te_1}{\te_2})}{\subst}=\SumExp{\ApplySubst{\te_1}{\subst}}{\ApplySubst{\te_2}{\subst}}\\
  {\ApplySubst{({\LambdaExp{\x}{\te}})}{\subst}={\LambdaExp{\x}{\ApplySubst{\te}{\CancelSubst{\subst}{\{\x\}}}}}\quad\quad \mbox{if}\ \x\not\in\FV(\subst)}\\
  {\ApplySubst{(\AppExp{\te_1}{\te_2})}{\subst}=\AppExp{\ApplySubst{\te_1}{\subst}}{\ApplySubst{\te_2}{\subst}}}\\
  \ApplySubst{\Unbind{\tsubstN}{\te}}{\subst}=\Unbind{\tsubstN}{\ApplySubst{\te}{\CancelSubst{\subst}{\dom(\tsubstN)}}}\quad\quad \mbox{if}\ \dom(\tsubstN)\cap\FV(\subst)=\emptyset \\
  {\ApplySubst{(\LambdaCode{\x}{\tsubst}{\te})}{\subst}=\LambdaCode{\x}{\ApplySubst{\tsubst}{\subst}}
                                 {\ApplySubst{\te}{\CancelSubst{\subst}{\{\x\}}}}\quad\quad \mbox{if}\ \x\not\in\FV(\subst)
  }\\
  \ApplySubst{{(}\X_1\mapsto\te_1,\ldots,\X_m\mapsto\te_m{)}}{\subst}=\X_1\mapsto\ApplySubst{\te_1}{\subst},\ldots,\X_m\mapsto\ApplySubst{\te_m}{\subst}
\end{array}
$ \caption{Free variables and {application of substitution}}\label{fig:subst} \HSep
\end{figure}
Note that an
unbinder (that is, a variable occurrence in the {domain} of an unbinding map) behaves like a $\lambda$-binder:  for instance, in a term
of shape $\Unbind{\x\mapsto\X}{\te}$, the unbinder $\x$ introduces a local
scope, that is, binds free occurrences of $\x$ in $\te$.  Hence, a
substitution for $\x$ is not propagated inside $\te$. Moreover, a
condition
 which prevents
capture of free variables, similar to the $\lambda$-abstraction
case, is needed. For instance, the
term
$\te=\AppExp{(\LambdaExp{\y}{\Unbind{\x\mapsto\X}{\AppExp{\y}{\x}}})}{(\LambdaExp{\z}{\x})}$
is stuck, since the substitution $\ApplySubst{\Unbind{\x\mapsto\X}{\AppExp{\y}{\x}}}{\y \mapsto
\LambdaExp{\z}{\x}}$ is undefined, and therefore the term does not reduce to
$\Unbind{\x\mapsto\X}{\AppExp{(\LambdaExp{\z}{\x})}{\x}}$, which would be, indeed,
wrong. This condition is enforced by the definition of substitution where we require that the free variables of the substitution are disjoint from the domain of the unbinding map. However, as for the similar requirement for substitution applied to a lambda-abstraction we can always $\alpha$-rename the variables in the domain of the unbinding map (and consistently in the body of the unbound term) to meet the requirement {(we omit the obvious formal definition)}.\EZComm{forse metterla?} Consider the term
$\te'=\AppExp{(\LambdaExp{\y}{\Unbind{\x'\mapsto\X}{\AppExp{\y}{\x'}}})}{(\LambdaExp{\z}{\x})}$ which is $\te$ with the unbinder $\x$ renamed to $\x'$. The term $\te'$ is $\alpha$-equivalent to $\te$, and reduces (correctly) to $\Unbind{\x'\mapsto\X}{\AppExp{(\LambdaExp{\z}{\x})}{\x'}}$.

The rules of the operational semantics for sum and standard application are the usual ones. For application of a rebinding lambda-abstraction to an unbound term the variable $\x$ in the body of the lambda-abstraction is substituted with $\te'$ in which {each unbinder is} substituted with the term bound to the corresponding name in the rebinding. This application, however, evaluates to $\error$ in case the domain of the rebinding map is not a superset of the range of the unbinding map. {We write the side condition in both rules for clarity, even though it is redundant in the former.}

\begin{example}\label{ex:syntax}
This example shows that unbound terms can be {arguments of both standard and rebinding lambda-abstractions}. Consider the term $\te$ that follows
\[\AppExp {(\ \LambdaExp{{\y}} { {\AppExp{ (\LambdaCode{\z}{\X\mapsto 2 }{\z} )}\ {{\y}}  }
                                                  +
                                                  {\AppExp{ (\LambdaCode{\z}{\X\mapsto 3 }{\z} )}\ {{\y}}  }    
                                                }\  )} {\Unbind{\x\mapsto\X}{\x+1}}
\]
applying the rules of the operational semantics we get the following reduction:
\[
\begin{array}{rclr}
\te & \ev & { {\AppExp{ (\LambdaCode{\z}{\X\mapsto 2 }{\z} )} {\Unbind{\x\mapsto\X}{\x+1}}  }
                                                  +
                                                  {\AppExp{ (\LambdaCode{\z}{\X\mapsto 3 }{\z} )} {\Unbind{\x\mapsto\X}{\x+1}}  }    
                                                } & \lab{App}\\
 & \ev &     (2+1)+
                                                  {\AppExp{ (\LambdaCode{\z}{\X\mapsto 3 }{\z} )} {\Unbind{\x\mapsto\X}{\x+1}}  }    
                                                   & \lab{AppRebindOK} \\ 
& \ev &     3+{\AppExp{ (\LambdaCode{\z}{\X\mapsto 3 }{\z} )} {\Unbind{\x\mapsto\X}{\x+1}}  }    
                                                   & \lab{Sum}\\  
& \ev &     3+(3+1)  & \lab{AppRebindOK} \\  
& \ev &     3+4  &  \lab{Sum}\\ 
& \ev &     7 &  \lab{Sum}\\                              
\end{array}
\]
\end{example}

{From now on, we will use the abbreviation $\Rebind{\te}{\tsubst}$ for $\AppExp{(\LambdaCode{\x}{\tsubst}{\x})}{\te}$.}

\begin{example}\label{ex:dynBinding}
\newcommand{\ff}{\mathit{f}}
The classical example showing the difference between static and dynamic scoping:
\begin{verbatim}
let x=3 in
  let f=lambda y.x+y in
    let x=5 in
      f 1
\end{verbatim}
can be translated as follows:
\begin{enumerate}
\item $\AppExp{(\LambdaExp
                    {{\x}}
                    {\AppExp{(\LambdaExp
                              {\ff}
                              {\AppExp{(\LambdaExp
                                        {\x}{\AppExp{\ff}{1}})}
                                      {5}}
                            )} 
                            {(\LambdaExp{\y}{\x+\y})}
                    }
              )}
              {3}$ which reduces to $4$ ({\em static scoping}), or
\item {$\AppExp{(\LambdaExp
                    {\x}
                    {\AppExp{(\LambdaExp
                              {\ff}
                              {\AppExp{(\LambdaExp
                                        {\x}{\AppExp{{\Rebind{\ff}{\X\mapsto\x}}}{1}})}
                                        {5}}
                            )} 
                            {\Unbind{\x\mapsto\X}{\LambdaExp{\y}{\x+\y}}}
                    }
              )} 
              {3}$} which reduces to {$6$} ({\em dynamic scoping}). 
\end{enumerate}
\end{example}
\begin{example}\label{ex:meta-prog}
The following example shows some of the meta-programming features offered by the open code and the rebinding lambda-abstraction constructs.
$$
f = \LambdaExp{\x_1}
          {
            \LambdaExp{\x_2}
                      {\Unbind
                        {\y_1\mapsto\X,\y_2\mapsto\X}
                        {{\AppExp{(\Rebind{\x_1}{\X\mapsto\y_1})}{\Rebind{\x_2}{\X\mapsto\y_2}}
                        } }
                      }
          }
$$ 
$f$ is a function manipulating open code: it takes two open code fragments, with the same global nominal interface containing the sole name $\X$,
and, after rebinding both, it combines them by means of function application; finally, it unbinds the result so that the resulting nominal interface contains again
the sole name $\X$. The fact that the unbinding map is not injective means that the free variables of the two combined open code fragments will be finally
rebound to the same value (that is, the same value will be shared). 

For instance, ${\Rebind{(\AppExp{\AppExp{f}{\Unbind{\x\mapsto\X}{\LambdaExp{\y}{\y+\x}}}}{\Unbind{\x\mapsto\X}{\x}})}{\X\mapsto 1}}$ reduces to $2$.
\end{example}

\EZ{As the examples above show, the most useful construct in many cases is the application of a rebinding to an expression $\Rebind{\te}{\tsubst}$, which is a shortcut for $\AppExp{(\LambdaCode{\x}{\tsubst}{\x})}{\te}$. We prefer to take as primitive the rebinding lambda-abstraction $\LambdaCode{\x}{\tsubst}{\te}$ because in this way we also have, for free,  rebindings as first-class values (they are terms of shape $\LambdaCode{\x}{\tsubst}{\x}$), 
with a syntax which is a smooth extension of lambda calculus.}

\section{Typed calculus}\label{sect:typed}
The syntax {and} operational semantics of the typed calculus {are} given in \refToFigure{calculusTy}. 

\begin{figure}[h]
\HSep
\begin{grammatica}
\produzione{\te}{ \x\mid\nval\mid\SumExp{\te_1}{\te_2}\mid\LambdaExp{\x{:}\T}{\te}\mid\AppExp{\te_1}{\te_2}\mid\Unbind{\tsubstN}{\te}\mid
\LambdaCode{{\x{:}{\T}}}{\tsubst}{\te}\  } {{term}}\\
\produzione{\tsubstN}{\x_1{:}\T_1\mapsto\X_1,\ldots,\x_m{:}\T_m\mapsto\X_m }{{unbinding map}}   \\ 
\produzione{\tsubst}{\X_1{:}\T_1\mapsto\te_1,\ldots,\X_m{:}\T_m\mapsto\te_m }{{rebinding map}}   \\ \\
\produzione{\T}{\intType\mid\funType{\T_1}{\T_2} \mid\TypeWithMod{\xtenv}{\T} } {{type}}\\
\produzione{\tenv}{\x_1{:}\T_1,\ldots,\x_m{:}\T_m}{{context}}\\ 
\produzione{\xtenv}{\X_1{:}\T_1,\ldots,\X_m{:}\T_m}{{name context}}\\ \\
\produzione{\val}{{\nval}\mid\LambdaExp{\x{:}\T}{\te}{\mid\LambdaCode{\x{{:}\T}}{\tsubst}{\te}}\mid\Unbind{\tsubstN}{\te}\ (\FV(\te)\subset\dom(\tsubstN)) }{{value}}\\
\produzione{\context}{\emptycontext\mid\SumExp{\context}{\te}\mid\SumExp{\nval}{\context}\mid\AppExp{\context}{\te}\mid\AppExp{\val}{\,\context}} {{evaluation context}}\\
\end{grammatica}
\HSep
\[  
\begin{array}{c}
 \AppExp{(\LambdaCode{\x:\T}{\tsubst}{\te}) }   
    {\Unbind{\tsubstN}{\te'} } \ev
 \ApplySubst{\te}{\x\mapsto \ApplySubst{\te'} {{\y\mapsto \tsubst(\tsubstN(\y)) \mid \y\in\dom(\tsubstN) } } }\quad\quad\lab{{AppRebind}}
\end{array}
\]
\caption{Syntax of typed calculus, and modified {reduction rules}}\label{fig:calculusTy} \HSep
\end{figure}

In typed terms, as usual, \EZ{variables and names} (either in lambda-abstractions or maps) are decorated with types. {We assume that in an unbinding map two variables which are mapped in the same name are decorated with the same type, hence there is an implicit decoration for names as well.} 

Types are either ground types: integer {and} function types, or \EZ{unbound types, that is, types for open code, that needs the rebinding of some names. More precisely,} a term has type $\TypeWithMod{\X_1{:}\T_1,\ldots,\X_m{:}\T_m}{\T}$ if the term needs the rebinding of the names $\X_i$ ($1\leq i\leq m$)  to terms of type $\T_i$ ($1\leq i\leq m$) in order to produce a term of type $\T$. 
\EZ{A sequence $\X_1{:}\T_1,\ldots,\X_m{:}\T_m$ is called a 
\emph{type context} and is well-formed if, for each $i,j$ ($1\leq i,j \leq m$) , $\X_i=\X_j$ implies $\T_i=\T_j$, and analogously for \emph{contexts} $\x_1{:}\T_1,\ldots,\x_m{:}\T_m$}. Two name contexts are equal modulo permutation and repetitions of type assignments; consequently, the two types
$\TypeWithMod{\Y_1{:}\intType,\Y_2{:}\funType{\intType}{\intType}}{\intType}$ and $\TypeWithMod{\Y_2{:}\funType{\intType}{\intType},\Y_1{:}\intType,\Y_1{:}\intType}{\intType}$ 
are considered equal. 

The operational semantics of the untyped and typed versions of the language differs only in the {rules for} application with rebinding, and the fact that we do not have rule $\lab{ContError}$. In this case we do not check the correctness of the rebinding, that is that we have at least a rebinding for each name, and this is of the right type, since, as we will prove, the type system enforces this \EZ{property} statically.

{For instance, the untyped term $\te=\AppExp{(\LambdaCode{\x}{\Y\mapsto 3}{\SumExp{\x}{4}})}{\Unbind{\y\mapsto\Y}{\AppExp{\y}{2}}}$ {reduces} with rule $\lab{AppRebindOK}$ of \refToFigure{calculus} to $\SumExp{(\AppExp{3}{2})}{4}$ which is a stuck term. However, the term cannot be assigned a type, since in the typed version of the unbound term $\Unbind{\y{:}\funType{\intType}{\intType}\mapsto\Y}{\AppExp{\y}{2}}$ the variable $\y$ and therefore the name $\Y$ have type $\funType{\intType}{\intType}$, whereas in the typed version of  the rebinding lambda-abstraction, $\LambdaCode{\x{:}\TypeWithMod{\Y{:}\intType}{\intType}}{\Y{:}\intType\mapsto 3}{\SumExp{\x}{4}}$, the name $\Y$ \EZ{has type $\intType$}.}

The typing rules use the subtyping relation defined in \refToFigure{subtyping}. 

 \begin{figure}[t]
\HSep
  \begin{center}
$\begin{array}{c}
\\
\NamedRule{Sub-int}{}{\intType\leq\intType}{}
\Space
\NamedRule{Sub-arr}{\T_1'\leq\T_1\Space\T_2\leq\T_2'}{\funType{\T_1}{\T_2}\leq\funType{\T_1'}{\T_2'}}{}
\Space
\NamedRule{Sub-unbind}{\xtenv_2\leq\xtenv_1\Space\T_1\leq\T_2}{\TypeWithMod{\xtenv_1}{\T_1}\leq\TypeWithMod{\xtenv_2}{\T_2}}{}
\\[4ex]
\NamedRule{Sub-context}
{\T_1\leq\T'_1,\ldots,\T_m\leq\T'_m}
{\X_1{:}\T_1,\ldots,\X_{m+k}{:}\T_{m+k}\leq\X_1{:}\T'_1,\ldots,\X_m{:}\T'_m}{}

\end{array}$
\end{center}
\caption{Typed calculus: subtyping rules}\label{fig:subtyping} \HSep
\end{figure}

Subtyping rules for $\intType$ and arrow types are standard; as usual, for arrow types
subtyping is contravariant in the type of the formal parameter, and covariant in the returned type. 
A similar consideration applies to unbound types: an unbound term of type $\TypeWithMod{\xtenv_2}{\T_2}$ can be safely replaced by another term
of type $\TypeWithMod{\xtenv_1}{\T_1}$ if the requirements expressed by the corresponding \EZ{name} context $\xtenv_1$ are weaker than those
of $\xtenv_2$, and the type $\T_1$ of the term obtained after rebinding is a subtype of $\T_2$.

Finally, subtyping for \EZ{name} contexts \EZ{coincides with} the usual notion of width and depth subtyping for record types:
a \EZ{name} context $\xtenv_1$ is more specific than $\xtenv_2$  if it defines at least the same names defined by $\xtenv_2$,  
associated with types that are allowed to be subtypes of the corresponding types in $\xtenv_2$.

 \begin{figure}[t]
\HSep
  \begin{center}
$\begin{array}{c}
\NamedRule{T-Num}{}{ \IsWFExp{\Gamma}{\nval}{{\intType} } }{}\Space
\NamedRule{T-Var}{\Gamma(\x)=\T}{\IsWFExp{\Gamma}{\x}{\T}}{}\Space
\NamedRule{T-Sum}{\IsWFExp{\Gamma}{\te_1} {{\intType} }\Space\IsWFExp {\Gamma}{\te_2}{{\intType}} }{\IsWFExp{\Gamma}{\SumExp{\te_1}{\te_2}}{{\intType}}}
 \\[4ex]
\NamedRule{T-Abs}{\IsWFExp{\Subst{\tenv}{\x{:}\T_1}}{\te}{\T_2}}{\IsWFExp{\tenv}{\LambdaExpWithType{\x}{\T_1}{\te}}{{\funType{\T_1}{\T_2}}}}{}\Space
\NamedRule{T-App}
{\IsWFExp{\Gamma}{\te_1}{{\funType{{\T_1}}{\T_2}}}\Space
\IsWFExp{\Gamma}{\te_2}{{\T_1'}}\Space \T_1'\leq\T_1} {\IsWFExp{\Gamma}{\AppExp{\te_1}{\te_2}}{\T_2}}{}\\[4ex]
\NamedRule{T-Unbind}{
\begin{array}{l}
  \IsWFExp{\Subst{\tenv}{\extractXEnv{\tsubstN}}}{\te}{\T} 
  \end{array}
}  
{\IsWFExp{\tenv}{\Unbind{\tsubstN}{\te}} { \TypeWithMod{\extractNEnv{\tsubstN}}{\T}}}\Space
\NamedRule{T-Rebind}  {\begin{array}{l}
                                        \IsWFExp{\Subst{\tenv}{\x{:}\T'} } {\te}{\T}\quad\quad
                                        \tsubst=\X_1{:}\T_1\mapsto\te_1,\ldots,\X_m{:}\T_m\mapsto\te_m \\
                                         \IsWFExp{\tenv}{\te_i}{\T_i}  \quad(1\leq i\leq m)
                                       \end{array}
                                      }  
{\IsWFExp{\tenv}{\LambdaCode{x{:}(\TypeWithMod{\extractNEnv{\tsubst}}{\T'})}{\,\tsubst\,}{\te}}
{{\funType{(\TypeWithMod{\extractNEnv{\tsubst}}{\T'}) }{\T}}  }  }\\
\end{array}$
\end{center}
\caption{Typed calculus: typing rules}\label{fig:typing} \HSep
\end{figure}

\EZ{In the typing rules (see~\refToFigure{typing}) we use} the following notations \EZ{for extracting a name context from an unbinding/rebinding map, extracting a context from an unbinding map, and updating a context, respectively}:
\begin{itemize}
\item $\extractNEnv{\X_1{:}\T_1\mapsto\te_1,\ldots,\X_m{:}\T_m\mapsto\te_m}=\extractNEnv{\x_1{:}\T_1\mapsto\X_1,\ldots,\x_m{:}\T_m\mapsto\X_m}=\X_1{:}\T_1,\ldots,\X_m{:}\T_m$
\item $\extractXEnv{\x_1{:}\T_1\mapsto\X_1,\ldots,\x_m{:}\T_m\mapsto\X_m}=\x_1{:}\T_1,\ldots,\x_m{:}\T_m$
and 
\item $\Subst{\tenv}{\tenv'}(\x)=\tenv'(\x)$ if $\x\in\dom(\tenv')$, $\tenv(\x)$ otherwise.
\end{itemize}

The rules are quite standard: \EZ{variables have their declared type,} integers and lambda-abstractions have types not needing rebindings, and the sum operator requires parameters of integer type.   
Rule $\lab{T-App}$ is standard: the type of the actual parameter must be a subtype of the type of the formal one. 
We have two rules for application, both require that the left term {has} a function type.
The first $\lab{T-App}$ is the standard  rule, in which the type of the actual parameter is equal to the one of the formal one. The second application rule, $\lab{T-AppReb}$, in case the argument  reduces to an unbound term, the type of the formal parameter of the rebinding lambda-abstraction to which the function reduce, may provide rebindings for more names than the ones needed. 

For an unbound term the body of the term must have type $\T$ in the current environment $\tenv$
updated by the environment $\extractXEnv{\tsubstN}$ where the unbound variables have the type specified in $\tsubstN$.
\EZ{Note that the rule can be applied only if,}
in the resulting unbound type, the \EZ{name} context $\extractNEnv{\tsubstN}$ extracted from $\tsubstN$ \EZ{is} well-formed, according to the definition given above.
For instance, the \EZ{name} context extracted from $\x_1{:}\intType\mapsto\X,\x_2{:}\funType{\intType}{\intType}\mapsto\X$ is not well-formed, 
since the name $\X$ is required to have the two different types $\intType$ and $\funType{\intType}{\intType}$ at the same time.

{Finally, {the type of the formal parameter of a rebinding lambda-abstraction specifies the types of the names that are in the rebinding $\tsubst$, and the type that the variables $\x$ must have in order to type the body of the lambda-abstraction.}}

Let $\T=\TypeWithMod{\X{:}\intType}{\intType}$, the typed term corresponding to the untyped term of Example \ref{ex:syntax} is:
\[
\AppExp {(\ \LambdaExp{\y{:}\T} { {\AppExp{ (\LambdaCode{\z{:}\T}{\X{:}{\intType}\mapsto 2  ) }{\z} )} {\y}  }
                                                  +
                                                  {\AppExp{ (\LambdaCode{\z{:}\T}{\X{:}{{\intType}}\mapsto 3  ) }{\z} )} {{\y}}  }    
                                                }\  )} {\Unbind{\x{:}{{{\intType}}}\mapsto\X}{\x+1}}
\]

\section{Soundness of the calculus}\label{sect:sound}

The type system is {\em safe} since types are preserved by reduction, {\em subject reduction property}, and closed terms are not stuck, {\em progress property}.

The proof of subject reduction relays on the inversion and context lemmas that follows.

\begin{lemma} [Inversion]\label{l:inversion}\

\begin{enumerate}
\item\label{il1} If $\IsWFExp{\Gamma}{\x}{\T}$, then {$\T=\Gamma(\x)$}.
\item\label{il2} If $\IsWFExp{\Gamma}{\nval}{\T}$, then {$\T={\intType}$}.
\item\label{il3} If $\IsWFExp{\Gamma}{\SumExp{\te_1}{\te_2}}{\T}$, then  $\T={\intType}$, $\IsWFExp{\Gamma}{\te_1}{{\intType}}$, and $\IsWFExp{\Gamma}{\te_2}{{\intType}}$.
\item\label{il4} If $\IsWFExp{\Gamma}{\LambdaExp{\x{:}\T_1}{\te}}{\T}$,
then for some $\T_2$ we have
$\T={(\funType{\T_1}{\T_2})}$, and $\IsWFExp{\Subst{\tenv}{\x{:}\T_1}}{\te}{\T_2}$.
\item\label{il5} If $\IsWFExp{\Gamma}{\AppExp{\te_1}{\te_2}}{\T}$, then
for some $\T_1$, and $\T'_1$ we have
$\IsWFExp{\Gamma}{\te_1}{{(\funType{\T_1}{\T})}}$, $\IsWFExp{\Gamma}{\te_2}{\T'_1}$, and $\T'_1\leq\T_1$.
\item\label{il6} If $\IsWFExp{\Gamma}{\Unbind{\tsubstN}{\te}}{\T}$, then for some $\T'$ we have 
$\T=\TypeWithMod{\extractNEnv{\tsubstN}}{\T'}$, and 
$\IsWFExp{\Subst{\tenv}{\extractTEnv{\tsubstN}}}{\te}{\T'}$.
\item\label{il7} If ${\IsWFExp{\tenv}{\LambdaCode{x{:}\T'}{ \X_1{:}\T_1\mapsto\te_1,\ldots,\X_m{:}\T_m\mapsto\te_m }{\te}}{\T }  }$, then for some $\T'_1$, and $\T'_2$, $\T=\funType{\T'}{\T'_2}$, $\T'=\TypeWithMod{\X_1{:}\T_1,\ldots,\X_m{:}\T_m}{T'_1}$, $\IsWFExp{\Subst{\tenv}{\x{:}\T'_1} } {\te}{\T'_2}$, and for all $i$, $1\leq i\leq m$, $\IsWFExp{\tenv}{\te_i}{\T_i} $.
\end{enumerate}
\end{lemma}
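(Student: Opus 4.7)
The proof goes by case analysis on the last rule used in the derivation of each hypothesis. Since the typing rules in \refToFigure{typing} are syntax-directed --- every rule concludes with a typing judgment whose subject has a unique top-level constructor, and no general subsumption rule is present --- each of the seven items is settled by identifying the unique rule applicable to the given syntactic form and reading off its premises (and side conditions) as the required conclusion. No structural induction is needed: one level of rule inversion suffices.

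Items (\ref{il1})--(\ref{il4}) are immediate. For $\x$ only \lab{T-Var} applies, giving $\T = \tenv(\x)$; for $\nval$ only \lab{T-Num}, giving $\T = \intType$; for $\SumExp{\te_1}{\te_2}$ only \lab{T-Sum}, supplying all three conjuncts; and for $\LambdaExp{\x{:}\T_1}{\te}$ only \lab{T-Abs}, from which $\T = \funType{\T_1}{\T_2}$ and $\IsWFExp{\Subst{\tenv}{\x{:}\T_1}}{\te}{\T_2}$ for a suitable $\T_2$. For item (\ref{il5}), \lab{T-App} directly yields the existentially quantified $\T_1, \T_1'$ together with the judgments $\IsWFExp{\tenv}{\te_1}{\funType{\T_1}{\T}}$, $\IsWFExp{\tenv}{\te_2}{\T_1'}$, and the side condition $\T_1' \leq \T_1$.

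For item (\ref{il6}), \lab{T-Unbind} forces $\T$ to have the shape $\TypeWithMod{\extractNEnv{\tsubstN}}{\T'}$ for some $\T'$ and gives the body typing under the extended context $\Subst{\tenv}{\extractTEnv{\tsubstN}}$. For item (\ref{il7}), \lab{T-Rebind} pins the rebinding map to be exactly $\X_1{:}\T_1\mapsto\te_1,\ldots,\X_m{:}\T_m\mapsto\te_m$, constrains the declared parameter type to the shape $\TypeWithMod{\X_1{:}\T_1,\ldots,\X_m{:}\T_m}{\T'_1}$, produces the body judgment $\IsWFExp{\Subst{\tenv}{\x{:}\T'_1}}{\te}{\T'_2}$, and yields the component judgments $\IsWFExp{\tenv}{\te_i}{\T_i}$ for $1 \leq i \leq m$, whence $\T = \funType{\T'}{\T'_2}$ with $\T' = \TypeWithMod{\X_1{:}\T_1,\ldots,\X_m{:}\T_m}{\T'_1}$.

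The only point that would require care is the potential presence of a subsumption rule or of the second application rule \lab{T-AppReb} mentioned in the surrounding text: neither appears in \refToFigure{typing}, so the syntax-directedness I rely on is genuine. Were a \lab{T-Sub} rule added, each clause would need an auxiliary induction along the derivation interleaved with inversion on the subtyping relation of \refToFigure{subtyping}, exploiting the fact that $\leq$ preserves the outermost type constructor ($\intType$, arrow, or $\TypeWithMod{\cdot}{\cdot}$), so that the declared shape of $\T$ can still be recovered; but this complication does not arise here.
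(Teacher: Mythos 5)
Your proof is correct and follows the same route as the paper, which simply argues by case analysis on the (syntax-directed) typing rules; your additional observations about the absence of subsumption and of the $\lab{T-AppReb}$ rule in \refToFigure{typing} are accurate but do not change the argument. Nothing further is needed.
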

\begin{proof}
By case analysis on typing rules. 
\end{proof}

\begin{lemma}[Substitution]\label{l:subst}
If $\IsWFExp{\Subst{\tenv}{\x_1{:}\T_1,\dots,\x_m{:}\T_m}}{\te}{\T}$,  
$\IsWFExp{\tenv}{\te_i}{\T'_i}$, and $\T'_i\leq\T_i$ ($1\leq i\leq m$), then
\PG{$\IsWFExp{\tenv}{\ApplySubst{\te}{\x_1\mapsto\te_1,\ldots,\x_m\mapsto\te_m}}{\T'}$ where $\T'\leq\T$.}
\end{lemma}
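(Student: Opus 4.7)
The plan is to proceed by induction on the structure of the term $\te$ (equivalently, on the typing derivation $\IsWFExp{\Subst{\tenv}{\x_1{:}\T_1,\dots,\x_m{:}\T_m}}{\te}{\T}$). Let $\subst=\x_1\mapsto\te_1,\ldots,\x_m\mapsto\te_m$ and let $\tenv'=\Subst{\tenv}{\x_1{:}\T_1,\dots,\x_m{:}\T_m}$. For each case one applies the corresponding inversion clause from Lemma~\ref{l:inversion} to decompose the hypothesis, then rebuilds a typing derivation for $\ApplySubst{\te}{\subst}$, appealing to the induction hypothesis on subterms and, where needed, to the subtyping rules of \refToFigure{subtyping}.

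The easy cases are as follows. For $\te=\nval$ and $\te=\SumExp{\te_1}{\te_2}$, the substitution does not alter the type and induction on subterms together with rule $\lab{T-Sum}$ suffices (with $\T'=\T$). For $\te=\x$, split on whether $\x\in\dom(\subst)$: if $\x=\x_i$, then $\T=\tenv'(\x)=\T_i$ and $\ApplySubst{\te}{\subst}=\te_i$ has type $\T'_i\leq\T_i$ by hypothesis; otherwise $\ApplySubst{\te}{\subst}=\x$ and the same type $\tenv(\x)=\T$ is derivable. For $\te=\AppExp{\te_1}{\te_2}$, inversion gives types $\funType{\T_1}{\T}$ and $\T'_1\leq\T_1$, the IH yields subtypes thereof after substitution, and rule $\lab{T-App}$ reassembles the derivation using transitivity of $\leq$.

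The more delicate cases are the binding constructs. For $\te=\LambdaExp{\x{:}\T_1}{\te_0}$, we first use $\alpha$-renaming so that $\x\notin\FV(\subst)\cup\{\x_1,\dots,\x_m\}$, which is exactly the side condition enforcing well-definedness of $\ApplySubst{\te}{\subst}$ in \refToFigure{subst}. Then the context $\Subst{\tenv'}{\x{:}\T_1}$ coincides with $\Subst{\Subst{\tenv}{\x{:}\T_1}}{\x_1{:}\T_1,\dots,\x_m{:}\T_m}$, so the IH applies and rule $\lab{T-Abs}$ closes the case. The case $\te=\Unbind{\tsubstN}{\te_0}$ is analogous: after $\alpha$-renaming the variables in $\dom(\tsubstN)$ to avoid $\FV(\subst)$, the extended context $\Subst{\tenv'}{\extractXEnv{\tsubstN}}$ factors through $\Subst{\tenv}{\extractXEnv{\tsubstN}}$ by $\x_1{:}\T_1,\dots,\x_m{:}\T_m$, and rule $\lab{T-Unbind}$ reconstructs the unbound type. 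Finally, for $\te=\LambdaCode{\x{:}\T'}{\tsubst_0}{\te_0}$, again after $\alpha$-renaming $\x$, the IH is applied to the body $\te_0$ in the context extended with $\x$ and, separately, to each term in the range of $\tsubst_0$; here the rule $\lab{T-Rebind}$ requires that each $\te_i$ in $\tsubst_0$ has exactly the declared type $\T_i$ (no subtyping is tolerated at that position), so we must use rule $\lab{T-App}$ style reasoning only for the outer result type rather than for the rebinding map itself.

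The main obstacle is bookkeeping around the binding constructs: one must handle the $\alpha$-renaming conventions uniformly, keep track of which positions admit genuine subtyping (only the result types propagate the $\T'\leq\T$ slack through the induction) and which require equality (the types declared inside $\tsubst$ in a rebinding lambda-abstraction), and verify that the context-update operator $\Subst{\cdot}{\cdot}$ commutes appropriately with the $\subst$-extension so that the induction hypothesis is applicable in each subderivation. Once these conventions are fixed, each inductive step is a routine reassembly of the typing rule that produced the original derivation.
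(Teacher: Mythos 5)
Your overall strategy---induction on the structure of $\te$, using Lemma~\ref{l:inversion} to decompose and the typing rules to reassemble, with $\alpha$-renaming at the binders---is exactly the approach the paper takes (its proof is the one-line ``by induction on terms $\te$''), and your treatment of the variable, sum, application, $\lambda$-abstraction and unbind cases is correct, including the observation that the $\T'\leq\T$ slack must be threaded through \lab{Sub-arr}, \lab{Sub-unbind} and transitivity of $\leq$.

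There is, however, a genuine gap in the rebinding-lambda case, and you have in fact put your finger on it without resolving it. For $\te=\LambdaCode{\x{:}\T'}{\tsubst_0}{\te_0}$ the substitution is pushed into the terms $\te_i$ occurring in the range of $\tsubst_0$, and the induction hypothesis only yields $\IsWFExp{\tenv}{\ApplySubst{\te_i}{\subst}}{S_i}$ with $S_i\leq\T_i$, whereas the premise of \lab{T-Rebind} demands $\IsWFExp{\tenv}{\ApplySubst{\te_i}{\subst}}{\T_i}$ \emph{exactly}, $\T_i$ being the annotation recorded in the (unchanged) syntax of $\tsubst_0$. Since the system is syntax-directed and has no subsumption rule, each term has a unique type, so when $S_i$ is a strict subtype the rule is simply inapplicable; your remark that one should ``use \lab{T-App} style reasoning only for the outer result type'' does not help, because the problem is in a premise of \lab{T-Rebind}, not in the conclusion. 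A concrete instance: with $\T_1=\TypeWithMod{\X{:}\intType}{\intType}$, substituting $\x_1$ by $\Unbind{\emptyset}{0}$ (of type $\TypeWithMod{\emptyset}{\intType}\leq\T_1$) into $\LambdaCode{\z{:}(\TypeWithMod{\Y{:}\T_1}{\intType})}{\Y{:}\T_1\mapsto\x_1}{0}$ produces a term whose rebinding map cannot satisfy the premise of \lab{T-Rebind}. Closing this case requires either strengthening the statement (e.g.\ restricting to substitutions whose terms have exactly the declared types, which is not what the subject-reduction proof needs), relaxing \lab{T-Rebind} to admit $\IsWFExp{\tenv}{\te_i}{\T''_i}$ with $\T''_i\leq\T_i$, or adding subsumption; as the rules stand, the induction does not go through at this case, and you should say so explicitly rather than gesture at \lab{T-App}.
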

\begin{proof}
By induction on {terms} $\te$. 
\end{proof}

\begin{lemma}[Context]\label{l:context}
Let $\IsWFExp{\tenv}{\inContext{\te}}{\T}$, then
\begin{itemize}
\item $\IsWFExp{\tenv}{\te}{\T'}$ for some $\T'$, and
\item \PG{for all $\te'$, if $\IsWFExp{\tenv}{\te'}{\T''}$, and $\T''\leq\T'$, then
$\IsWFExp{\tenv}{\inContext{\te'}}{\T'''}$ for $\T'''\leq\T$ .}
\end{itemize}
\end{lemma}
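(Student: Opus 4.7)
The plan is to proceed by structural induction on the evaluation context $\context$, using the Inversion Lemma to decompose a typing judgement for $\inContext{\te}$ into judgements for its immediate subterms, then applying the induction hypothesis to the smaller context, and finally reassembling a typing derivation via the typing rules of \refToFigure{typing}, carefully tracking subtyping.

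In the base case $\context=\emptycontext$ the statement is immediate: $\inContext{\te}=\te$ so one takes $\T'=\T$ for the first clause, and $\T'''=\T''$ for the second. For the inductive cases the shape is always the same. For instance, consider $\context=\AppExp{\context_1}{\te_2}$. By Inversion~(\ref{il5}) applied to $\IsWFExp{\tenv}{\AppExp{\inContext[\context_1]{\te}}{\te_2}}{\T}$ there exist $\T_1,\T'_1$ with $\IsWFExp{\tenv}{\inContext[\context_1]{\te}}{\funType{\T_1}{\T}}$, $\IsWFExp{\tenv}{\te_2}{\T'_1}$ and $\T'_1\leq\T_1$. By the induction hypothesis applied to $\context_1$, there is a type $\T'$ with $\IsWFExp{\tenv}{\te}{\T'}$, and whenever $\IsWFExp{\tenv}{\te'}{\T''}$ with $\T''\leq\T'$ we get $\IsWFExp{\tenv}{\inContext[\context_1]{\te'}}{\T_2}$ for some $\T_2\leq\funType{\T_1}{\T}$. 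Here I would invoke the shape of $\T_2$ forced by rule $\lab{Sub-arr}$: $\T_2=\funType{\T_3}{\T_4}$ with $\T_1\leq\T_3$ and $\T_4\leq\T$. Since $\T'_1\leq\T_1\leq\T_3$, rule $\lab{T-App}$ yields $\IsWFExp{\tenv}{\AppExp{\inContext[\context_1]{\te'}}{\te_2}}{\T_4}$ with $\T_4\leq\T$, as required.

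The remaining cases are analogous. For $\context=\AppExp{\val}{\context_2}$ one uses Inversion~(\ref{il5}) on the outer application, applies the IH to $\context_2$, and then the contravariance in the argument position is harmless because the IH already yields a subtype of the previous type of the hole, which is itself a subtype of the formal parameter type, so transitivity of $\leq$ and rule $\lab{T-App}$ close the case with the same result type. For the two sum cases one uses Inversion~(\ref{il3}); since the only subtype of $\intType$ is $\intType$ itself (by rule $\lab{Sub-int}$, and noting that no other rule concludes $\T\leq\intType$), the retyped hole still has type $\intType$, and rule $\lab{T-Sum}$ closes the case with $\T'''=\intType=\T$.

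The step I expect to require the most care is the application case, because it is the only one where subtyping at the hole does not immediately collapse (as it does for $\intType$) and must be transported through a contravariant position. The key facts needed are the inversion on $\leq$ for arrow types (i.e. that any subtype of $\funType{\T_1}{\T}$ is itself an arrow type $\funType{\T_3}{\T_4}$ with $\T_1\leq\T_3$ and $\T_4\leq\T$), and transitivity of $\leq$; both follow by a straightforward induction on the derivation of the subtyping judgement which I would state as an auxiliary remark before starting the proof.
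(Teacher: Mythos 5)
Your proof is correct and follows the same route as the paper, which states only ``by induction on evaluation contexts $\context$'' and leaves all details implicit: you fill in exactly that induction, using Lemma~\ref{l:inversion} to decompose, the induction hypothesis on the smaller context, and the typing rules to reassemble. Your identification of the needed auxiliary facts (transitivity of $\leq$ and inversion of subtyping at arrow and $\intType$ types, both immediate since the subtyping rules of \refToFigure{subtyping} are syntax-directed) is accurate, and all five context cases are covered.
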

\begin{proof}
By induction on evaluation contexts {$\context{}$}. 
\end{proof}

\begin{theorem}[Subject Reduction]\label{theo:srt}
If $\IsWFExp{\tenv}{\te}{\T}$ and $\te\ev\te'$, then $\IsWFExp{\tenv}{\te'}{\T}$.
\end{theorem}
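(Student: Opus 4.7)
The plan is to proceed by induction on the derivation of $\te\ev\te'$, with case analysis on the last reduction rule.

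The case of \lab{Sum} is immediate from \lab{T-Num}. For \lab{App}, the standard argument applies: Inversion parts~5 and~4 give the body's type in the extended context together with a subtyping $\T'_1\leq\T_1$ between actual and formal argument, and the Substitution Lemma then delivers a type $\T'\leq\T$ for $\ApplySubst{\te}{\x\mapsto\val}$ (reading ``has type $\T$'' up to the subtyping slack already present in the Substitution Lemma).

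The key case is \lab{AppRebind}. I would start from $\IsWFExp{\tenv}{\AppExp{(\LambdaCode{\x{:}\T}{\tsubst}{\te})}{\Unbind{\tsubstN}{\te'}}}{\T_*}$ and apply Inversion~5 to obtain a formal-parameter type $\T_a$ with $\IsWFExp{\tenv}{\LambdaCode{\x{:}\T}{\tsubst}{\te}}{\funType{\T_a}{\T_*}}$, $\IsWFExp{\tenv}{\Unbind{\tsubstN}{\te'}}{\T_f}$, and $\T_f\leq\T_a$. Inversion~7 then yields $\T_a=\TypeWithMod{\extractNEnv{\tsubst}}{\T'_1}$, $\IsWFExp{\Subst{\tenv}{\x{:}\T'_1}}{\te}{\T_*}$, and $\IsWFExp{\tenv}{\te_i}{\T_i}$ for every binding $\X_i{:}\T_i\mapsto\te_i$ of $\tsubst$. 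Inversion~6 gives $\T_f=\TypeWithMod{\extractNEnv{\tsubstN}}{\T'_2}$ and $\IsWFExp{\Subst{\tenv}{\extractXEnv{\tsubstN}}}{\te'}{\T'_2}$. Decomposing $\T_f\leq\T_a$ by \lab{Sub-unbind} yields $\extractNEnv{\tsubst}\leq\extractNEnv{\tsubstN}$ and $\T'_2\leq\T'_1$; \lab{Sub-context} further refines the first inequality into $\range(\tsubstN)\subseteq\dom(\tsubst)$ together with pointwise subtyping between the type of each $\tsubst(\X)$ and the type of the corresponding unbinder in $\tsubstN$. Two applications of the Substitution Lemma finish the case: first, applied to $\te'$ with the inner substitution $\y\mapsto\tsubst(\tsubstN(\y))$ it produces a term of some type $\leq\T'_2\leq\T'_1$; then, applied to $\te$ with the outer substitution $\x\mapsto\ApplySubst{\te'}{\cdots}$ it delivers a type $\leq\T_*$.

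The inductive case \lab{Cont} is routine with the Context Lemma: the inner redex has some type $\T'$, the induction hypothesis gives a type $\leq\T'$ for the reduced subterm, and the second clause of the Context Lemma transports this back to a type $\leq\T_*$ for $\inContext{\te'}$. The main obstacle is the \lab{AppRebind} case, specifically the fact that the single subtyping judgement $\T_f\leq\T_a$ must simultaneously supply (a)~the set inclusion $\range(\tsubstN)\subseteq\dom(\tsubst)$, which guarantees that the rebinding is total and thus explains why the error rule is absent from the typed semantics, and (b)~the pointwise type compatibility of the rebinding terms with the unbinders, which is what makes the inner substitution typable at a suitable subtype of $\T'_1$.
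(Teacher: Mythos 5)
Your proposal is correct and follows essentially the same route as the paper's proof: the same case analysis on the last reduction rule, the same use of the Inversion, Substitution and Context lemmas, and in the $\lab{AppRebind}$ case the same decomposition of the subtyping between the unbound type of the argument and the formal parameter type via $\lab{Sub-unbind}$/$\lab{Sub-context}$, followed by two applications of the Substitution Lemma. The only point worth noting is one you already flag yourself: both your argument and the paper's actually establish preservation only up to subtyping ($\T'''\leq\T$), which is slightly weaker than the literal statement of the theorem.
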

\begin{proof}
\PG{
By induction on reduction derivations. We
consider only the interesting rules.\\
If the applied rule is $\lab{App}$, then
\[{\AppExp{(\LambdaExp{\x{:}\T_1}{\te})}{\val}\ev\ApplySubst{\te}{\x\mapsto\val}}.\]
By hypothesis $\IsWFExp{\tenv}{\AppExp{(\LambdaExp{\x{:}\T_1}{\te})}{\val}}{\T}$.
From Lemma \ref{l:inversion}, cases (\ref{il5}) and (\ref{il4})  we have that there is $\T'_1$ such that
$\IsWFExp{\Gamma}{\LambdaExp{\x{:}\T_1}{\te}}{{(\funType{\T_1}{\T})}}$,
$\IsWFExp{\Gamma}{\val}{\T'_1}$, and $\T'_1\leq\T_1$.
Again by Lemma \ref{l:inversion}, case
(\ref{il4}) we have that $\IsWFExp{\Subst{\tenv}{\x{:}\T_1}}{\te}{\T}$. By Lemma \ref{l:subst} we derive that
$\IsWFExp{\tenv}{\ApplySubst{\te}{\x\mapsto\val}}{\T}$.
}

\PG{
If the applied rule is $\lab{AppRebind}$, then 
\[
 \AppExp{(\LambdaCode{\x:\T'}{\tsubst}{\te}) }   
    {\Unbind{\tsubstN}{\te'} } \ev
 \ApplySubst{\te}{\x\mapsto \ApplySubst{\te'} {{\y\mapsto \tsubst(\tsubstN(\y)) \mid \y\in\dom(\tsubstN) } } }
\]
From Lemma \ref{l:inversion}, cases (\ref{il5}) and (\ref{il7}) we have that there is $\T''$ such that
$\IsWFExp{\Gamma}{\LambdaCode{\x:\T'}{\tsubst}{\te}}{{(\funType{\T'}{\T})}}$,
$\IsWFExp{\Gamma}{\Unbind{\tsubstN}{\te'}}{\T''}$, and $\T''\leq\T'$.
Again by Lemma \ref{l:inversion}, case
(\ref{il7}) we have that $\T'=\TypeWithMod{\extractNEnv{\tsubst}}{\T'''}$ for some $\T'''$, and
 $\IsWFExp{\Subst{\tenv}{\x{:}\T'''} } {\te}{\T}$.\\
From $\IsWFExp{\Gamma}{\Unbind{\tsubstN}{\te'}}{\T''}$, Lemma \ref{l:inversion}, case (\ref{il6}), 
we have that $\T''=\TypeWithMod{\extractNEnv{\tsubstN}}{\T''_1}$ for some $\T''_1$.
Assume that 
\[
\tsubst=\X_1{:}\T_1\mapsto\te_1,\ldots,\X_n{:}\T_n\mapsto\te_n
\]
 so $\extractNEnv{\tsubst}=\X_1{:}\T_1,\ldots,\X_n{:}\T_n$, from the subtyping
rule $\lab{Sub-unbind}$ we have that
\begin{itemize}
\item $T''_1\leq\T'''$, 
\item $\extractNEnv{\tsubstN}=\X_1{:}\T'_1,\ldots,\X_m{:}\T'_m$, where $m\leq n$, and
\item $\T_i\leq\T'_i$ for all  $1\leq i\leq m$.
\end{itemize}
From Lemma \ref{l:inversion}, case (\ref{il7}), 
we have that for all $i$, $1\leq i\leq m$, $\IsWFExp{\tenv}{\te_i}{\T_i} $, and from $\IsWFExp{\Gamma}{\Unbind{\tsubstN}{\te'}}{\T''}$, Lemma \ref{l:inversion}, case (\ref{il6}), we get that
$\IsWFExp{\Subst{\tenv}{\x_1{:}\T'_1,\dots,\x_m{:}\T'_m}}  {\te'}{\T''_1}$. 
Let $\te''=\ApplySubst{\te}{\x_1\mapsto\te_1,\ldots,\x_m\mapsto\te_m}=\ApplySubst{\te'} {{\y\mapsto \tsubst(\tsubstN(\y)) \mid \y\in\dom(\tsubstN) } }$, from Lemma \ref{l:subst}, we derive that $\IsWFExp{\tenv}{\te''}{\T''_2} $ and $\T''_2\leq\T''_1$. Therefore,  again Lemma \ref{l:subst}, $T''_2\leq\T'''$ (derived from transitivity of $\leq$), and 
$\IsWFExp{\Subst{\tenv}{\x{:}\T'''} } {\te}{\T}$, imply that $\IsWFExp{\tenv}{ \ApplySubst{\te}{\x\mapsto\te''} }{\T''_3}$ for $\T''_3\leq\T$, which is what we wanted to prove.
}

If the applied rule is $\lab{Cont}$, $\te= \inContext{\te_1}$, $\te'=\inContext{\te'_1}$ and $\te_1\ev\te_1'$. By Lemma \ref{l:context}, and 
$\IsWFExp{\tenv}{\inContext{\te_1}}{\T}$  for some  $\T'$, $\IsWFExp{\tenv}{{\te_1}}{\T'}$. \PG{By induction hypothesis, $\IsWFExp{\tenv}{{\te'_1}}{\T'_1}$ with $\T'_1\leq\T'$. Therefore, by Lemma \ref{l:context}, $\IsWFExp{\tenv}{\inContext{\te'_1}}{\T''_1}$ with $\T''_1\leq\T$.}
\end{proof}

In order to show the Progress Theorem, we first state the Canonical {Forms} Lemma, and then a lemma asserting the
usual relation between type contexts and free variables (Lemma \ref{l:free}).

\begin{lemma}[Canonical Forms]\label{l:canonical}\

\begin{enumerate}
\item\label{cfl1} If $\IsWFExp{}{\val}{{\text{\tt int}}}$, then
$\val=\nval$.
\item\label{cfl2} If $\IsWFExp{}{\val}{\TypeWithMod{\xtenv}{{\T}}}$, then $\val=\Unbind{\tsubstN}{\te}$ for some $\tsubstN$ and $\te$.
\item\label{cfl3} If $\IsWFExp{}{\val}{{(\funType{\T}{\T'})}}$, then either
$\val=\LambdaExp{\x{:}\T}{\te}$, or 
$\val=\LambdaCode{\x{{:}\T}}{\tsubst}{\te}$ for some $\tsubst$ and $\te$.
\end{enumerate}
\end{lemma}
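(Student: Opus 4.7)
The plan is to proceed by case analysis on the syntactic form of the value $\val$, invoking the Inversion Lemma (Lemma \ref{l:inversion}) at each case to match the shape of the value against the claimed type. Since the grammar of values gives only four possibilities ($\nval$, $\LambdaExp{\x{:}\T}{\te}$, $\LambdaCode{\x{:}\T}{\tsubst}{\te}$, and $\Unbind{\tsubstN}{\te}$), and since the typing rules $\lab{T-Num}$, $\lab{T-Abs}$, $\lab{T-Rebind}$, $\lab{T-Unbind}$ each assign a top-level type of a specific shape (integer, arrow, arrow, unbound type respectively), each of the three claims follows by eliminating the incompatible cases.

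Concretely, for part (\ref{cfl1}) I argue that if $\val$ were a lambda-abstraction or a rebinding lambda-abstraction, then by Lemma \ref{l:inversion}(\ref{il4}) or (\ref{il7}) its type would be an arrow type, contradicting $\T = \intType$; if $\val$ were an unbound term, Lemma \ref{l:inversion}(\ref{il6}) would yield a type of the form $\TypeWithMod{\xtenv}{\T'}$, again a contradiction. The only remaining case is $\val = \nval$. Part (\ref{cfl2}) is entirely symmetric: Lemma \ref{l:inversion}(\ref{il2}) rules out $\nval$, and (\ref{il4}),(\ref{il7}) rule out the two lambda forms since the top-level type is not an arrow, leaving $\val = \Unbind{\tsubstN}{\te}$. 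Part (\ref{cfl3}) likewise excludes $\nval$ by (\ref{il2}) and $\Unbind{\tsubstN}{\te}$ by (\ref{il6}), leaving the two lambda-like cases.

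A subtle point worth checking is that subsumption is not applied directly at the top of a value's derivation: inspection of \refToFigure{typing} confirms that the subtyping relation only enters through the argument side of rule $\lab{T-App}$, so that each value's declared shape determines its type up to the rigid format given by the corresponding introduction rule. This is exactly what Lemma \ref{l:inversion} codifies, so the contradictions in the eliminated cases are genuine rather than being bypassed by an implicit subsumption step. I expect no real obstacle here; the lemma is a routine syntactic sanity check whose only purpose is to feed into the Progress theorem.
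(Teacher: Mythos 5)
Your proposal is correct and follows essentially the same route as the paper, which proves this lemma by case analysis on the shape of values (the paper's proof is just that one line, and your elaboration via the Inversion Lemma is the intended unpacking). Your observation that subsumption never appears at the root of a value's derivation is accurate and is precisely why the case elimination goes through.
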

\begin{proof}
By case analysis on the shape of values.
\end{proof}

\begin{lemma}\label{l:free}
If $\IsWFExp{\tenv}{\te}{\T}$, then
$\FV(\te)\subseteq\dom(\tenv)$.
\end{lemma}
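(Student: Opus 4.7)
The plan is to prove this by structural induction on the term $\te$, using the Inversion Lemma (Lemma \ref{l:inversion}) at each step to read off the typing hypotheses for the subterms, and then combining these with the definition of $\FV$ from \refToFigure{subst}.

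First I would dispatch the easy base cases. For $\te = \nval$ we have $\FV(\nval) = \emptyset$, which is trivially contained in $\dom(\tenv)$. For $\te = \x$, the Inversion Lemma case (\ref{il1}) gives $\tenv(\x) = \T$, so $\x \in \dom(\tenv)$ and hence $\FV(\x) = \{\x\} \subseteq \dom(\tenv)$. For $\te = \SumExp{\te_1}{\te_2}$ and $\te = \AppExp{\te_1}{\te_2}$, Inversion (cases (\ref{il3}) and (\ref{il5})) supplies types for $\te_1$ and $\te_2$ under the same context $\tenv$; the induction hypothesis then gives $\FV(\te_i) \subseteq \dom(\tenv)$, and the union definition of $\FV$ closes the case.

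Next I would handle the three binding constructs, which are all analogous. For $\te = \LambdaExp{\x{:}\T_1}{\te_0}$, Inversion case (\ref{il4}) gives $\IsWFExp{\Subst{\tenv}{\x{:}\T_1}}{\te_0}{\T_2}$; by induction $\FV(\te_0) \subseteq \dom(\tenv) \cup \{\x\}$, and since $\FV(\LambdaExp{\x{:}\T_1}{\te_0}) = \FV(\te_0) \setminus \{\x\}$, the result follows. For $\te = \Unbind{\tsubstN}{\te_0}$, Inversion case (\ref{il6}) gives $\IsWFExp{\Subst{\tenv}{\extractTEnv{\tsubstN}}}{\te_0}{\T'}$; by induction $\FV(\te_0) \subseteq \dom(\tenv) \cup \dom(\tsubstN)$, and removing $\dom(\tsubstN)$ yields exactly $\FV(\Unbind{\tsubstN}{\te_0})$. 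For $\te = \LambdaCode{\x{:}\T'}{\tsubst}{\te_0}$ with $\tsubst = \X_1{:}\T_1\mapsto\te_1,\ldots,\X_m{:}\T_m\mapsto\te_m$, Inversion case (\ref{il7}) provides both that $\te_0$ is typable in $\Subst{\tenv}{\x{:}\T'_1}$ and that each $\te_i$ is typable in $\tenv$; the induction hypothesis applied to all of these gives $\FV(\te_0) \setminus \{\x\} \subseteq \dom(\tenv)$ and $\FV(\te_i) \subseteq \dom(\tenv)$ for every $i$, which together cover $\FV(\LambdaCode{\x{:}\T'}{\tsubst}{\te_0}) = (\FV(\te_0)\setminus\{\x\}) \cup \FV(\tsubst)$.

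I do not expect any real obstacle here: the proof is a routine structural induction, and the only point that requires a moment of care is lining up $\dom(\extractTEnv{\tsubstN})$ with $\dom(\tsubstN)$ in the unbind case (immediate from the definition of $\extractXEnv{\cdot}$) and remembering that the rebind case has two separate groups of subterms contributing to $\FV$, both of which must be checked via the induction hypothesis.
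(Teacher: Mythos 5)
Your proof is correct and matches the paper's approach: the paper simply states ``by induction on type derivations,'' and your structural induction on $\te$ mediated by Lemma~\ref{l:inversion} is the same argument spelled out case by case, since the type system is syntax-directed. All the binding cases (lambda, unbind, rebinding lambda) are handled correctly, including the contribution of $\FV(\tsubst)$ in the last one.
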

\begin{proof}
By induction on type derivations.
\end{proof}

\begin{theorem}[Progress]\label{theo:pt}
If $\IsWFExp{}{\te}{\T}$, then {either} $\te$ is a
value, or $\te\ev\te'$ for some $\te'$.
\end{theorem}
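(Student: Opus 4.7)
The plan is to proceed by induction on the structure of $\te$, using inversion (Lemma \ref{l:inversion}) to extract the shape of the typing derivation, and Lemma \ref{l:canonical} to characterize values at each type. For base cases, the variable case is ruled out by Lemma \ref{l:free}: if $\IsWFExp{}{\x}{\T}$ then $\x \in \dom(\emptyset) = \emptyset$, contradiction; a numeral is already a value, and both $\LambdaExp{\x{:}\T_1}{\te'}$ and $\LambdaCode{\x{:}\T}{\tsubst}{\te'}$ are values by definition.

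For compound cases such as $\SumExp{\te_1}{\te_2}$ and $\AppExp{\te_1}{\te_2}$, the strategy is standard: apply inversion to obtain typings of the subterms, apply the induction hypothesis to each, and if either subterm reduces use rule $\lab{Cont}$ with the appropriate evaluation context $\context$. If both subterms are values, invoke Lemma \ref{l:canonical}: for sum they must both be numerals by case (\ref{cfl1}), so rule Sum fires; for application the function value at type $\funType{\T_1}{\T_2}$ is, by case (\ref{cfl3}), either a standard lambda-abstraction (in which case rule App applies) or a rebinding lambda-abstraction.

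The delicate case is $\te = \AppExp{\LambdaCode{\x{:}\T'}{\tsubst}{\te''}}{\val}$. From Lemma \ref{l:inversion}(\ref{il7}) the formal parameter type is $\T' = \TypeWithMod{\extractNEnv{\tsubst}}{\T'''}$, and from (\ref{il5}) we get $\IsWFExp{}{\val}{\T''}$ with $\T'' \leq \T'$. Lemma \ref{l:canonical}(\ref{cfl2}) gives $\val = \Unbind{\tsubstN}{\te_0}$, and inversion (\ref{il6}) yields $\T'' = \TypeWithMod{\extractNEnv{\tsubstN}}{\T''_1}$. The subtyping rules $\lab{Sub-unbind}$ and $\lab{Sub-context}$ together force $\dom(\extractNEnv{\tsubstN}) \subseteq \dom(\extractNEnv{\tsubst})$, that is, $\range(\tsubstN) \subseteq \dom(\tsubst)$. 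Hence the side condition of rule $\lab{AppRebind}$ is satisfied and the redex reduces. This is the main technical step of the proof, and it is exactly the property that motivates the typed version to drop $\lab{AppRebindERR}$.

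For the case $\te = \Unbind{\tsubstN}{\te_0}$ we must verify the side condition $\FV(\te_0) \subseteq \dom(\tsubstN)$ of the value production. Inversion (\ref{il6}) gives $\IsWFExp{\extractTEnv{\tsubstN}}{\te_0}{\T'}$, so Lemma \ref{l:free} yields $\FV(\te_0) \subseteq \dom(\extractTEnv{\tsubstN}) = \dom(\tsubstN)$, as required; thus $\te$ is a value. No case produces an error term, so the typed semantics never gets stuck, completing the induction.
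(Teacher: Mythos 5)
Your proof is correct and follows essentially the same route as the paper's: rule out the variable case via Lemma \ref{l:free}, handle sums and applications by the induction hypothesis plus rule \lab{Cont} and the Canonical Forms Lemma, and for the rebinding application use \lab{Sub-unbind}/\lab{Sub-context} to conclude $\range(\tsubstN)\subseteq\dom(\tsubst)$ so that \lab{AppRebind} fires. The only (harmless) differences are that you induct on the term rather than on the typing derivation, and that you make explicit the check, via Lemma \ref{l:free}, that a well-typed closed unbound term satisfies $\FV(\te_0)\subseteq\dom(\tsubstN)$ and is therefore a value --- a point the paper leaves implicit.
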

\begin{proof}
By induction on the typing derivation of $\IsWFExp{}{\te}{\T}$.

If $\IsWFExp{}{\te}{\T}$ and $\te$ is not a value,  then the last  applied rule
cannot be $\lab{T-Num}$, 
$\lab{T-Abs}$, $\lab{T-Unbind}$, or $\lab{T-Rebind}$. Moreover  the typing environment for the
expression is empty, {hence} by Lemma \ref{l:free}  the last applied rule  cannot be $\lab{T-Var}$.

If the last typing rule applied is $\lab{T-Sum}$, then
$\te=\SumExp{\te_1}{\te_2}$ and:
\[
\Rule{\IsWFExp{}{\te_1}{{\intType}}\Space\IsWFExp{}{\te_2}
{{\intType}}}{\IsWFExp{}{\SumExp{\te_1}{\te_2}}{{\intType}}}{}
\]
If $\te_1$ is not a value, {then,} by induction hypothesis,
$\te_1\ev\te_1'$. So by rule $\lab{Cont}$, with context
$\context=\SumExp{\emptycontext}{\te_2}$, we have
${\SumExp{\te_1}{\te_2}}\ev{\SumExp{\te'_1}{\te_2}}$. If $\te_1$
is a value, {then,} by Lemma \ref{l:canonical},  case (\ref{cfl1}), $\te_1={\nval_1}$. Now, if
$\te_2$ is not a value, {then,} by induction hypothesis, $\te_2\ev\te_2'$.
So by rule $\lab{Cont}$, with context
$\context=\SumExp{{\nval_1}}{\emptycontext}$, we get
${\SumExp{\te_1}{\te_2}}\ev{\SumExp{\te_1}{\te'_2}}$. Finally,
if $\te_2$ is a value, {then}  by Lemma \ref{l:canonical}, case (\ref{cfl1}),
$\te_2={\nval_2}$. Therefore rule $\lab{Sum}$ is applicable.

\PG{If the last applied rule is $\lab{T-App}$, then
$\te={\AppExp{\te_1}{\te_2}}$, therefore  for some $\T'$ and $\T''$:
 \[\Rule
{\IsWFExp{}{\te_1}{{(\funType{\T'}{\T})}}\Space \IsWFExp{}{\te_2}{\T''}\Space\T''\leq T'}
{\IsWFExp{}{\AppExp{\te_1}{\te_2}}{\T}}{}\]
 If $\te_1$ is not a
value, then, by induction hypothesis, $\te_1\ev\te_1'$. So
$\AppExp{\te_1}{\te_2}= \inContext{\te_1}$ with
$\context=\AppExp{\emptycontext}{\te_2}$, and by rule $\lab{Cont}$,
$\AppExp{\te_1}{\te_2}\ev\AppExp{\te'_1}{\te_2}$. If $\te_1$ is a
value $\val$, {and} $\te_2$ is not a value, then, by induction
hypothesis, $\te_2\ev\te_2'$. So $\AppExp{\te_1}{\te_2}=
\inContext{\te_2}$ with $\context=\AppExp\val{\emptycontext}$, and
by rule $\lab{Cont}$, $\AppExp{\val}{\te_2}\ev\AppExp{\val}{\te'_2}$. \\
If both $\te_1$ and $\te_2$ are values, then by Lemma \ref{l:canonical},
case (\ref{cfl3}), 
\begin{enumerate}
\item $\te_1=\LambdaExp{\x{:}\T'}{\te}$, or
\item $\LambdaCode{\x{{:}\T'}}{\tsubst}{\te}$.
\end{enumerate}
For case (1), rule $\lab{App}$ can be applied. For case (2), from  Lemma \ref{l:inversion}, case (\ref{il7}), $\T'=\TypeWithMod{\extractNEnv{\tsubst}}{T'''}$.
Let $\extractNEnv{\tsubst}=\X_1{:}\T_1,\ldots,\X_n{:}\T_n$,  since $\T''\leq T'$ from rule $\lab{Sub-unbind}$, $\T''=\TypeWithMod{\X_1{:}\T'_1,\ldots,\X_m{:}\T'_m}{T''_1}$, where $m\leq n$, and rule $\lab{{AppRebind}}$ is applicable.
}
\end{proof}

\section{Related Work}\label{subS:comparison}

\subsection{Comparisons with our  previous calculi}\label{subS:compOur}
This proposal is based on our previous extension of lambda-calculus with unbind and rebind primitives \cite{DezaniEtAl10,DezaniEtAl11} and indeed shares with this work the ability to express static and dynamic binding mechanisms within the same calculus. However, there are two {main} novelties. Firstly, the explicit distinction between variables and names allows us a cleaner and simpler treatment of $\alpha$-equivalence, which only holds for variables\footnote{We thank an anonymous referee of \cite{DezaniEtAl10} for pointing out this problem.}, as in process and module calculi. Secondly, we investigate \EZ{here} a different semantics where rebinding is more controlled, that is, can only be applied to terms which effectively need to be rebound. 

The two previous points are reflected in the difference in the unbinding and rebinding constructs. 
\EZComm{Ho cambiato il confronto facendolo sui calcoli non tipati mi sembrava pi\`u chiaro}
In \cite{DezaniEtAl10,DezaniEtAl11},
\begin{itemize}
\item {the unbinding construct had shape} $\Unbind{\x_1,\ldots,\x_m}{\te}$, {specifying} a set of unbinders, 
\item \EZ{correspondingly,} {the rebinding construct had shape $\Rebind{\te}{\tsubst}$, specifying that the variables in the domain of $\tsubst$ were rebound,} 
\item {applying a rebinding to an unbound term had a behavior as in the current calculus, but rebinding could also be applied to terms not reducing to unbound terms. For instance, }
in
\EZ{\[\AppExp{\Rebind{(\LambdaExp{\y}{\y+\Unbind{\x}{\x}})}{\x\mapsto
1}}{\Unbind{\x}{\x+2}}
\]}
the term \EZ{$\Unbind{\x}{\x+2}$} is rebound inside the lambda-expression. To produce this semantics, rebinding maps {were} pushed, with reduction rules,  inside lambdas (and applications) and {remained} stuck on variables. They {were} then resolved when, via a standard application, the variable is substituted with an unbind construct. A term such as $\Rebind{(\SumExp{\Unbind{\x\mapsto\X}{\x}}{4})}{\X\mapsto 1}$ is stuck in the current calculus, whereas its analogous in the calculi of \cite{DezaniEtAl10,DezaniEtAl11} reduces to $5$. 
\end{itemize}  

In the calculus of the current paper, unbinding/rebinding are mediated by the use of names, and rebinding is done via application of rebinding lambda-abstractions $\LambdaCode{\x}{\tsubst}{\te}$ to unbound terms. The operational semantics of the rebinding construct $\Rebind{\te}{\tsubst}$ of \cite{DezaniEtAl10,DezaniEtAl11}, corresponds to the one of term $\AppExp{(\LambdaCode{\x}{\tsubst}{\x})}{\te}$ of the current calculus, if we take only well-typed terms. However,  e.g., the term  $\SumExp{\Unbind{\x\mapsto\X}{\x}}{4}$ is well typed in the calculus of \cite{DezaniEtAl10,DezaniEtAl11}, but not in the current one.

{Another important difference w.r.t. previous calculi is that,} as the abbreviation introduced at the end of Example \ref{ex:syntax} suggests, the term $\LambdaCode{\x}{\tsubst}{\x}$, which is a value, may be thought as the {rebinding} $\tsubst$. {That is, as a matter of fact, rebindings are first-class values.} In the previous calculi in \cite{DezaniEtAl10,DezaniEtAl11} this was not the case, as it is not in \cite{NanevskiEtAl08}, where {rebinding is} applied via the use of metavariables.

Comparing the type systems of previous calculi with the current one, we can notice that in \cite{DezaniEtAl11} soundness was proved for a call-by-name semantics and did not hold for call-by-value. The introduction of intersection types, in \cite{DezaniEtAl10},  allowed us the characterization of  terms that could be used both as values and in contexts providing unbindings, and the proof of soundness for a call-by-value semantics. The more restricted semantics of rebinding of the calculus of the current paper allows us the definition of a simpler type system that does not require intersection types, to prove soundness for the call-by-value evaluation strategy. 

\subsection{Comparisons with other calculi}\label{subS:compOther}
\EZ{\subsubsection{Dynamic binding}}
 \EZ{As we can see from Example \ref{ex:dynBinding}, we are able to model dynamic scoping, where identifiers are resolved w.r.t. their dynamic environments, and rebinding, where  identifiers are resolved w.r.t. their static environments, but additional primitives allow explicit modification of these environments. 
Classical references for dynamic scoping are  \cite{M98}, and \cite{Dami97alambda-calculus}, whereas  the $\lambda_{\rm marsh}$ calculus of
\cite{BiermanEtAl03a} supports
rebinding w.r.t. named contexts (not of individual
variables). 
Our semantics} corresponds more closely  to 
what happens in the calculus for dynamic binding of Nanevski, see \cite{Nanevski03}, and
in the contextual modal type theory of \cite{NanevskiEtAl08}. 
However, in \cite{Nanevski03}, there are two severe limitations: lambda-abstraction may not contain ``names'' 
(this \EZ{means} in our setting that it is not possible to unbind a variable in a lambda), and unbound terms may not
have free variables that may be unbound to names later. Both this limitations, and Nanevski says it, prevent 
gradual unbinding, and therefore the utility of the calculus for  metaprogramming. In contextual modal type theory, 
see \cite{NanevskiEtAl08},
there may not be occurrences of free variables in unbound terms (this was not a limitation of \cite{Nanevski03}),  
whereas this may happen in our calculi. Consider the term $\te={\Unbind{\y{:}{\intType}\mapsto\Y}{\SumExp{\y}{\x}}}$. In contextual modal type theory, there is no environment $\tenv$ in which this term is well-typed, due to the occurrence of the free $\x$ in an unbound term. In our calculi, {instead,} in an environment in which $\x$ has type ${\intType}$, the term is {well-typed}. {This allows an expressive power similar to ``unquote'', even though a precise comparison is matter of further investigation. Indeed, if the term is in the scope of a lambda, say $\LambdaExp{\x{:}{\intType}}{\cdots\te\cdots}$, applying the lambda to an integer, say $3$, replaces such integer in the term $\te$.}

\subsubsection{Modules}\label{sect:cms}
An unbound term $\Unbind{\tsubstN}{\te}$ resembles a module in the CMS calculus \cite{AZ-JFP02}, having just an output
unnamed component; in CMS the input components of a module (that is, the external components on which the module depends on)
are represented exactly by an unbinding map  $\tsubstN$, whereas output components (that is, the components defined in the module
that are available outside) are represented exactly by a rebinding map $\tsubst=\X_1\mapsto\te_1,\ldots,\X_m\mapsto\te_m$, where each  
$\te_i$ may contain free occurrences of variables (corresponding to unbinders).
Such variables represent input components that have to be
provided dynamically by other modules through nominal interfaces and suitable operators for combining modules.

For instance, the CMS term\footnote{A module in CMS can contain also local components, but for simplicity here we
consider just input and output components.} $M_1=[\x_1\mapsto\X_1, \x_2\mapsto\X_2; \Y_1\mapsto 1, \Y_2 \mapsto \x_1+\x_2]$
represents a module defining two output components $\Y_1$ and $\Y_2$, where the definition of $\Y_2$ depends on both
the input components $\X_1$ and $\X_2$. In the calculus we have presented here, such a module can be represented
by the term $t_1=\Unbind{\x_1\mapsto\X_1, \x_2\mapsto\X_2}{\LambdaCode{\z}{\Y_1\mapsto 1, \Y_2 \mapsto \x_1+\x_2}{\z}}$.
Similarly, the module $M_2=[\;\X_1\mapsto 1,\X_2\mapsto 2]$ without input components, is represented by the term
$t_2=\LambdaCode{\z}{\X_1\mapsto 1,\X_2\mapsto 2}{\z}$. 

Whereas in CMS linking of the two modules $M_1$ and $M_2$ can be expressed as a combination of primitive module operators
yielding the final module $M=[\ \Y_1\mapsto 1, \Y_2 \mapsto 1+2;]$, here linking is  expressed in terms
of application: $\AppExp{t_2}{t_1}$ reduces to the term $t=\LambdaCode{\z}{\Y_1\mapsto 1, \Y_2 \mapsto 1+2}{\z}$ which, indeed,
represents the module $M$. Finally, selection of a module component, as $M.\Y_2$, can be expressed again in terms
of application and an unbound term: $\AppExp{t}{\Unbind{\x\mapsto\Y_2}{\x}}$ reduces to $1+2$ which, in turns, reduces to $3$ as expected.

\subsubsection{Meta-programming}

We have already shown how the calculus supports meta-programming 
features to promote dynamic composition of software components (example 3 of \refToSection{untyped}, and
\refToSection{cms}). 

In particular, when components are composed together, it is possible to identify and/or
to discriminate components (dependening on the specific need) in a simple way.
Let us consider the following expression:
$$
f_1 = \LambdaExp{c_1}
          {
            \LambdaExp{c_2}
                      { \Unbind
                        {\x\mapsto\X}
                        {\AppExp{(\Rebind{c_1}{\X_1\mapsto\x})}{\Rebind{c_2}{\X_2\mapsto\x}
                        } }
                      }
        }
$$
The term $f_1$ represents a meta-operator for combining two different components $c_1$ and $c_2$
that have to be ``connected'' through the two input names $\X_1$ and $\X_2$, respectively.
The output of the component composition specified by $f_1$ is a new component with
just one input name $\X$ connected to both $\X_1$ and $\X_2$, and thus identifying the
two names of $c_1$ and $c_2$ (see \refToFigure{boxes}, left-hand-side).

The following term $f_2$ can be used for managing the opposite situation where the same input name of two different components $c_1$ and $c_2$
has to be discriminated when combining the two components (see \refToFigure{boxes}, right-hand-side).

$$
f_2 = \LambdaExp{c_1}
          {
            \LambdaExp{c_2}
                      { \Unbind
                        {\y_1\mapsto\X_1, \y_2\mapsto\X_2}
                        {\AppExp{(\Rebind{c_1}{\X\mapsto\y_1})}{\Rebind{c_2}{\X\mapsto\y_2}
                        } }
                      }
        }
$$ 
 
\begin{figure}
\begin{center}
\includegraphics[keepaspectratio,width=0.6\textwidth]{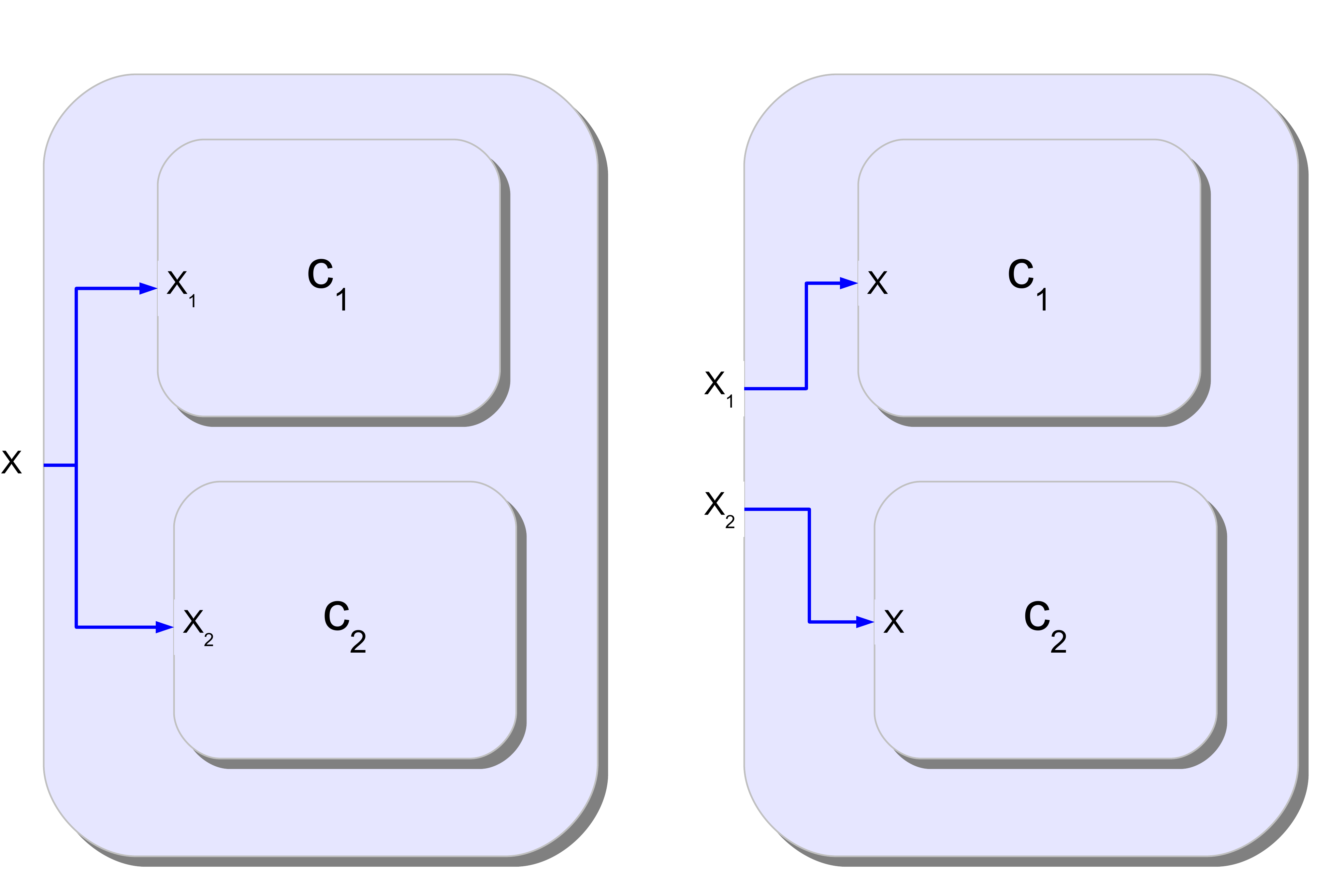}
\end{center}
\caption{Component composition with name identification (left) or discrimination (right).}\label{fig:boxes}
\end{figure}    

Operators corresponding to the functions $f_1$ and $f_2$, as defined above, can be expressed in the $\mathit{MML}^N_\nu$ calculus of \cite{AnconaMoggi04}.
In  $\mathit{MML}^N_\nu$, modules can be expressed as suitable combinations of records and code fragments; the code fragment expression $b(r)e$ is the analogous of
the unbound term $\Unbind{\tsubstN}{\te}$, where $r$ is a binding variable, occurring free
in $e$, which is expected to be dynamically bound to a resolver, that is, a map from names to expressions (indeed, resolvers correspond
to the rebinding maps); differently from the calculus presented in this paper, input names in code fragments are not referenced by means
of variables and unbinding maps but by means of resolver variables and the dot notation. 

The term $\Unbind{\x\mapsto \X,\y\mapsto \Y}{x+y}$ can be encoded in $\mathit{MML}^N_\nu$ by the term $b(r)r.X+r.Y$,
while the term  $\AppExp{(\LambdaCode{\z}{\X\mapsto 1,\Y \mapsto 2}{\z})}{\Unbind{x\mapsto X,y\mapsto Y}{x+y}}$ can be encoded by
the term $(b(r)r.X+r.Y)\langle?\{X{:}1\}\{Y{:}2\}\rangle$.
In $\mathit{MML}^N_\nu$ the expression $e\langle\theta\rangle$ allows the linking of the code fragment $e$ with the resolver
$\theta$ (the resolver $?\{X{:}1,Y{:}2\}$ corresponds to
the rebinding map $\X\mapsto 1,\Y\mapsto 2$). 

Differently from the calculus presented here, 
in $\mathit{MML}^N_\nu$ linking of code fragments $e\langle\theta\rangle$ is kept distinct from standard function application. However, 
$\mathit{MML}^N_\nu$ supports features not covered by the calculus presented in this paper: fresh names generation and multi-stage programming \cite{MetaML} 
(thanks to computational types).

The features of our calculus support meta-programming ``in the large'', promoting dynamic composition and reconfiguration of software components;
other interesting and finer-grained kinds of meta-programming, like multi-stage programming \cite{MetaML} and first-class patterns \cite{Jay04}, 
are beyond the scope of the calculus and would require non trivial extensions to be supported.  

\section{Conclusion}

\EZ{We have presented a minimal calculus which smoothly 
integrates positional  and nominal binding.}

\EZ{Despite its simplicity, this calculus provides a unifying foundation for module 
composition/adaptation, meta-programming, mobile code, 
and dynamic binding of variables. }\EZComm{Ho spostato le cose che erano related work nella sezione precedente}


\EZ{Soundness can be guaranteed by a type system where types are hierarchical, that is, an unbound type $\TypeWithMod{\xtenv}{\T}$ is the type of open code,
where $\xtenv$ describes the types of names to be rebound, and $\T$ can be an unbound type in turn}. These types have a modal interpretation studied in \cite{NanevskiEtAl08}. However, in our calculus we may have free variables in unbound terms (that could be bound in lambda-abstractions later on). Therefore, we may have terms of type  $\funType{\T}{\codeType{}{\T}}$ and $\funType{\codeType{}{\T}}{\T}$, implying that $\codeType{}{\T}$ and $\T$ would be, as modal formulas, equivalent, {thus making} the modal interpretation not correct. 
\EZ{We plan to investigate this interesting issue in further work}. 

\EZ{An alternative approach to guarantee soundness, that we described in previous work \cite{DezaniEtAl09,DezaniEtAl10} consists in simplifying the types so that they only take into account the number of rebindings needed to obtain a ground type, and to combine static and dynamic type checking.
That is, rebinding raises a dynamic error if for some variable there is no replacing term or it has the wrong type.}

As explained at the end of Section \ref{subS:compOur}, particular rebinding lambda-expression, which are values,  may be interpreted as representing substitutions, i.e., contexts of execution, moreover at the beginning of Section \ref{subS:compOther} we showed how the presence of free variables in unbound terms allows us to express ``unquote''. We are investigating how to increase these meta-programming capabilities of our calculus to support the code manipulation required by multi-stage programming, see \cite{MetaML}.

\bibliographystyle{eptcs}
{\small
\bibliography{Main}}

\begin{thebibliography}{10}
\providecommand{\bibitemdeclare}[2]{}
\providecommand{\surnamestart}{}
\providecommand{\surnameend}{}
\providecommand{\urlprefix}{Available at }
\providecommand{\url}[1]{\texttt{#1}}
\providecommand{\href}[2]{\texttt{#2}}
\providecommand{\urlalt}[2]{\href{#1}{#2}}
\providecommand{\doi}[1]{doi:\urlalt{http://dx.doi.org/#1}{#1}}
\providecommand{\bibinfo}[2]{#2}

\bibitemdeclare{article}{AZ-JFP02}
\bibitem{AZ-JFP02}
\bibinfo{author}{D.~\surnamestart Ancona\surnameend} \&
  \bibinfo{author}{E.~\surnamestart Zucca\surnameend} (\bibinfo{year}{2002}):
  \emph{\bibinfo{title}{A Calculus of Module Systems}}.
\newblock {\sl \bibinfo{journal}{Journal of Functional Programming}}
  \bibinfo{volume}{12}(\bibinfo{number}{2}), pp. \bibinfo{pages}{91--132},
  \doi{10.1017/S0956796801004257}.

\bibitemdeclare{inproceedings}{AnconaMoggi04}
\bibitem{AnconaMoggi04}
\bibinfo{author}{Davide \surnamestart Ancona\surnameend} \&
  \bibinfo{author}{Eugenio \surnamestart Moggi\surnameend}
  (\bibinfo{year}{2004}): \emph{\bibinfo{title}{A Fresh Calculus for Name
  Management}}.
\newblock In: {\sl \bibinfo{booktitle}{GPCE'04}}, {\sl \bibinfo{series}{LNCS}}
  \bibinfo{volume}{3286}, \bibinfo{publisher}{Springer}, pp.
  \bibinfo{pages}{206--224}, \doi{10.1007/978-3-540-30175-2\_11}.

\bibitemdeclare{inproceedings}{BiermanEtAl03a}
\bibitem{BiermanEtAl03a}
\bibinfo{author}{Gavin \surnamestart Bierman\surnameend},
  \bibinfo{author}{Michael~W. \surnamestart Hicks\surnameend},
  \bibinfo{author}{Peter \surnamestart Sewell\surnameend},
  \bibinfo{author}{Gareth \surnamestart Stoyle\surnameend} \&
  \bibinfo{author}{Keith \surnamestart Wansbrough\surnameend}
  (\bibinfo{year}{2003}): \emph{\bibinfo{title}{Dynamic Rebinding for
  Marshalling and Update, with Destruct-Time $\lambda$}}.
\newblock In: {\sl \bibinfo{booktitle}{ICFP'03}}, \bibinfo{publisher}{ACM
  Press}, pp. \bibinfo{pages}{99--110}, \doi{10.1145/944705.944715}.

\bibitemdeclare{article}{Dami97alambda-calculus}
\bibitem{Dami97alambda-calculus}
\bibinfo{author}{Laurent \surnamestart Dami\surnameend} (\bibinfo{year}{1997}):
  \emph{\bibinfo{title}{A Lambda-Calculus for Dynamic Binding}}.
\newblock {\sl \bibinfo{journal}{Theoretical Computer Science}}
  \bibinfo{volume}{192}(\bibinfo{number}{2}), pp. \bibinfo{pages}{201--231},
  \doi{10.1016/S0304-3975(97)00150-3}.

\bibitemdeclare{inproceedings}{DezaniEtAl09}
\bibitem{DezaniEtAl09}
\bibinfo{author}{Mariangiola \surnamestart Dezani-Ciancaglini\surnameend},
  \bibinfo{author}{Paola \surnamestart Giannini\surnameend} \&
  \bibinfo{author}{Elena \surnamestart Zucca\surnameend}
  (\bibinfo{year}{2009}): \emph{\bibinfo{title}{The essence of static and
  dynamic bindings}}.
\newblock In: {\sl \bibinfo{booktitle}{ICTCS'09}}.
\newblock
  \bibinfo{note}{\texttt{http://www.disi.unige.it/person/ZuccaE/Research/paper%
s/ICTCS09-DGZ.pdf}}.

\bibitemdeclare{inproceedings}{DezaniEtAl10}
\bibitem{DezaniEtAl10}
\bibinfo{author}{Mariangiola \surnamestart Dezani-Ciancaglini\surnameend},
  \bibinfo{author}{Paola \surnamestart Giannini\surnameend} \&
  \bibinfo{author}{Elena \surnamestart Zucca\surnameend}
  (\bibinfo{year}{2010}): \emph{\bibinfo{title}{Intersection Types for Unbind
  and Rebind}}.
\newblock In \bibinfo{editor}{Elaine \surnamestart Pimentel\surnameend},
  \bibinfo{editor}{Betti \surnamestart Venneri\surnameend} \&
  \bibinfo{editor}{Joe \surnamestart Wells\surnameend}, editors: {\sl
  \bibinfo{booktitle}{ITRS'10 - Intersection Types and Related Systems}}, {\sl
  \bibinfo{series}{EPTCS}}~\bibinfo{volume}{45}, pp. \bibinfo{pages}{45--58},
  \doi{10.4204/EPTCS.45.4}.

\bibitemdeclare{article}{DezaniEtAl11}
\bibitem{DezaniEtAl11}
\bibinfo{author}{Mariangiola \surnamestart Dezani-Ciancaglini\surnameend},
  \bibinfo{author}{Paola \surnamestart Giannini\surnameend} \&
  \bibinfo{author}{Elena \surnamestart Zucca\surnameend}
  (\bibinfo{year}{2011}): \emph{\bibinfo{title}{Extending the lambda-calculus
  with unbind and rebind}}.
\newblock {\sl \bibinfo{journal}{RAIRO - Theoretical Informatics and
  Applications}} \bibinfo{volume}{45}(\bibinfo{number}{1}), pp.
  \bibinfo{pages}{143--162}, \doi{10.1051/ita/2011008}.

\bibitemdeclare{article}{Jay04}
\bibitem{Jay04}
\bibinfo{author}{C.~Barry \surnamestart Jay\surnameend} (\bibinfo{year}{2004}):
  \emph{\bibinfo{title}{The pattern calculus}}.
\newblock {\sl \bibinfo{journal}{ACM Trans. Program. Lang. Syst.}}
  \bibinfo{volume}{26}(\bibinfo{number}{6}), pp. \bibinfo{pages}{911--937},
  \doi{10.1145/1034774.1034775}.

\bibitemdeclare{article}{M98}
\bibitem{M98}
\bibinfo{author}{Luc \surnamestart Moreau\surnameend} (\bibinfo{year}{1998}):
  \emph{\bibinfo{title}{A Syntactic Theory of Dynamic Binding}}.
\newblock {\sl \bibinfo{journal}{Higher Order and Symbolic Computation}}
  \bibinfo{volume}{11}(\bibinfo{number}{3}), pp. \bibinfo{pages}{233--279},
  \doi{10.1023/A:1010087314987}.

\bibitemdeclare{inproceedings}{Nanevski03}
\bibitem{Nanevski03}
\bibinfo{author}{Aleksandar \surnamestart Nanevski\surnameend}
  (\bibinfo{year}{2003}): \emph{\bibinfo{title}{From dynamic binding to state
  via modal possibility}}.
\newblock In: {\sl \bibinfo{booktitle}{PPDP'03}}, \bibinfo{publisher}{ACM}, pp.
  \bibinfo{pages}{207--218}, \doi{10.1145/888251.888271}.

\bibitemdeclare{article}{NanevskiPfenning05}
\bibitem{NanevskiPfenning05}
\bibinfo{author}{Aleksandar \surnamestart Nanevski\surnameend} \&
  \bibinfo{author}{Frank \surnamestart Pfenning\surnameend}
  (\bibinfo{year}{2005}): \emph{\bibinfo{title}{Staged computation with names
  and necessity}}.
\newblock {\sl \bibinfo{journal}{Journal of Functional Programming}}
  \bibinfo{volume}{15}(\bibinfo{number}{5}), pp. \bibinfo{pages}{893--939},
  \doi{10.1017/S095679680500568X}.

\bibitemdeclare{article}{NanevskiEtAl08}
\bibitem{NanevskiEtAl08}
\bibinfo{author}{Aleksandar \surnamestart Nanevski\surnameend},
  \bibinfo{author}{Frank \surnamestart Pfenning\surnameend} \&
  \bibinfo{author}{Brigitte \surnamestart Pientka\surnameend}
  (\bibinfo{year}{2008}): \emph{\bibinfo{title}{Contextual modal type theory}}.
\newblock {\sl \bibinfo{journal}{ACM Transactions \EZ{on} Computer Logic}}
  \bibinfo{volume}{9}(\bibinfo{number}{3}), \doi{10.1145/1352582.1352591}.

\bibitemdeclare{inproceedings}{RytzOdersky10}
\bibitem{RytzOdersky10}
\bibinfo{author}{Lukas \surnamestart Rytz\surnameend} \&
  \bibinfo{author}{Martin \surnamestart Odersky\surnameend}
  (\bibinfo{year}{2010}): \emph{\bibinfo{title}{Named and default arguments for
  polymorphic object-oriented languages}}.
\newblock In: {\sl \bibinfo{booktitle}{OOPS'10}}, \bibinfo{publisher}{ACM
  Press}, pp. \bibinfo{pages}{2090--2095}, \doi{10.1145/1774088.1774529}.

\bibitemdeclare{article}{MetaML}
\bibitem{MetaML}
\bibinfo{author}{Walid \surnamestart Taha\surnameend} \& \bibinfo{author}{Tim
  \surnamestart Sheard\surnameend} (\bibinfo{year}{2000}):
  \emph{\bibinfo{title}{Meta{M}{L} and multi-stage programming with explicit
  annotations}}.
\newblock {\sl \bibinfo{journal}{Theoretical Computer Science}}
  \bibinfo{volume}{248}(\bibinfo{number}{1-2}), pp. \bibinfo{pages}{211--242},
  \doi{10.1016/S0304-3975(00)00053-0}.

\end{thebibliography}

\end{document}